\newcommand{\anoi}{\texttt{NOI}~}
\newcommand{\Oh}[1]{\mathcal{O}\!\left( #1\right)}
\newcommand{\Is}       {:=}
\newif\ifDoubleBlind
\newcommand{\ie}{i.e.\ }
\newcommand{\etal}{et~al.~}
\newcommand{\eg}{e.g.\ }
\algnewcommand\algorithmicinput{\textbf{Input:}}
\algnewcommand\INPUT{\item[\algorithmicinput]}
\algnewcommand\algorithmicoutput{\textbf{Output:}}
\algnewcommand\OUTPUT{\item[\algorithmicoutput]}
\newcommand{\CC}{C\texttt{++}}
\pgfplotsset{
  cycle list/Dark2,
  every axis/.append style={
    ylabel near ticks,
    log basis y={2},
    log basis x={2},
    legend cell align={left},
    legend style={font=\Large},
    label style={font=\Large},
    title style={font=\Large},
    tick label style={font=\Large},
    cycle multiindex* list={
      Dark2
      \nextlist
      black white 2
      \nextlist
   },
  }
}
\def\MdR{\ensuremath{\mathbb{R}}}
\def\MdN{\ensuremath{\mathbb{N}}}
\newcommand{\mytitle}{Shared-memory Exact Minimum Cuts}
\newcommand{\mytitle}{Shared-memory Exact Minimum Cuts \thanks{
    The research leading to these results has received funding from the European Research Council under the European Community's Seventh Framework Programme (FP7/2007-2013) /ERC grant agreement No. 340506}}
\begin{document}

\title{\Large \mytitle}
\ifDoubleBlind
\author{}
\else
\author{
  Monika Henzinger \thanks{University of Vienna, Vienna, Austria. \texttt{monika.henzinger@univie.ac.at}} \\
  \and
  Alexander Noe \thanks{University of Vienna, Vienna, Austria. \texttt{alexander.noe@univie.ac.at}}\\
  \and
  Christian Schulz\thanks{University of Vienna, Vienna, Austria. \texttt{christian.schulz@univie.ac.at}} \\
}
\fi{}

\date{}
\maketitle

% Default Copyright Statement
%\fancyfoot[R]{\footnotesize{\textbf{Copyright \textcopyright\ 2019 by SIAM\\
%Unauthorized reproduction of this article is prohibited}}}

\begin{abstract} The minimum cut problem for an undirected edge-weighted graph
asks us to divide its set of nodes into two blocks while minimizing the weight sum of the cut edges. In this paper, we engineer the fastest known exact algorithm for the problem.

State-of-the-art algorithms like the algorithm of Padberg and Rinaldi or the algorithm of Nagamochi, Ono and Ibaraki identify edges that can be contracted to reduce the graph size such that at least one minimum cut is maintained in the contracted graph.
Our algorithm achieves improvements in running time over these algorithms by a multitude of techniques. First, we use a recently developed fast and parallel \emph{inexact} minimum cut algorithm to obtain a better bound for the problem. Then we use  reductions that depend on this bound, to reduce the size of the graph much faster than previously possible.
We use improved data structures to further improve the running time of our algorithm.
Additionally, we parallelize the contraction routines of Nagamochi, Ono and Ibaraki.
Overall, we arrive at a system that outperforms the fastest state-of-the-art solvers for the \emph{exact} minimum cut problem significantly.
\end{abstract}
\ifDoubleBlind
\vfill
\clearpage
\setcounter{page}{1}
\else
\fi{}

\section{Introduction}

Given an undirected graph with non-negative edge weights,
the \emph{minimum cut problem} is to partition the vertices into two sets so
that the sum of edge weights between the two sets is minimized. A minimum cut is
often also referred to as the \textit{edge connectivity} of a
graph~\cite{nagamochi1992computing,henzinger2017local}. The problem has
applications in many fields. In particular, for network
reliability~\cite{karger2001randomized,ramanathan1987counting}, 
assuming equal failure probability edges, the smallest edge cut in the network has
the highest chance to disconnect the network; in VLSI
design~\cite{krishnamurthy1984improved}, a minimum cut can be used to minimize the number of
connections between microprocessor blocks; and it is further used as a subproblem in the
branch-and-cut algorithm for solving the Traveling Salesman Problem and other
combinatorial problems~\cite{padberg1991branch}.

As the minimum cut problem has many applications and is often used as a subproblem for complex problems, it is highly important to have algorithms that are able solve the problem in reasonable time on huge data sets. As data sets are growing substantially faster than processor speeds, a good way to achieve this is efficient parallelization. While there is a multitude of algorithms, which solve the minimum cut problem exactly on a single core~\cite{hao1992faster,henzinger2017local,karger1996new,nagamochi1992computing}, to the best of our knowledge, there exists only one parallel exact algorithm for the minimum cut problem: Karger and Stein~\cite{karger1996new} present a parallel variant for their random contraction algorithm~\cite{karger1996new} which computes a minimum cut with high probability in polylogarithmic time using~$n^2$ processors. This is however unfeasible for large instances. There has been a MPI implementation of this algorithm by Gianinazzi~\etal\cite{gianinazzi2018communication}.
However, there have been no parallel implementations of the algorithms of Hao~\etal\cite{hao1992faster} and Nagamochi~\etal\cite{nagamochi1992computing,nagamochi1994implementing}, which outperformed other exact algorithms by orders of magnitude~\cite{Chekuri:1997:ESM:314161.314315,henzinger2018practical,junger2000practical}, both in real-world and generated networks.

All algorithms that solve the minimum cut problem exactly have non-linear running times, currently the fastest being the deterministic algorithm of Henzinger~\etal~\cite{henzinger2017local} with running time $\mathcal{O}(\log^2 n \log \log^2 n)$. There is a linear time approximation algorithm, namely the $(2 + \varepsilon)$-approximation algorithm by Matula~\cite{matula1993linear} and a linear time heuristic minimum cut algorithm by Henzinger \etal\cite{henzinger2018practical} based on the label propagation algorithm~\cite{raghavan2007near}. The latter paper also contains a shared-memory parallel implementation of their algorithm.

\subsection{Contribution.}
We engineer the fastest known \emph{exact} minimum cut algorithm for the problem. We do so by (1) incorporating recently proposed \emph{inexact} methods and (2) by using better suited data structures and other optimizations as well as (3) parallelization.

Algorithms like the algorithm of Padberg and Rinaldi or the algorithm of Nagamochi, Ono and Ibaraki identify edges that can be contracted to reduce the graph size such that at least one minimum cut is maintained in the contracted graph.
Our algorithm achieves improvements in running time by a multitude of techniques. First, we use a recently developed fast and parallel \emph{inexact} minimum cut algorithm~\cite{henzinger2018practical} to obtain a better approximate bound $\hat\lambda$ for the problem. As know graph reduction techniques depend on this bound, the better bound enables us to apply more reductions and reduce the size of the graph much faster.
For example, edges whose incident vertices have a connectivity of at least $\hat\lambda$, can be contracted without the contraction affecting the minimum cut.
Using better suited data structures as well as incorporating observations that help to save a significantly amount of work in the contraction routine of Nagamochi, Ono and Ibaraki~\cite{nagamochi1994implementing} further reduce the running time of our algorithm. For example, we observe a significantly higher performance on some graphs when using a FIFO bucket priority queue, bounded priority queues as well as better~bounds~$\hat\lambda$. Additionally, we give a parallel variant of the contraction routines of Nagamochi, Ono and Ibaraki~\cite{nagamochi1994implementing}.
Overall, we arrive at a system that outperforms the state-of-the-art by a factor of up to $2.5$ already sequentially, and when run in parallel by a factor of up to $12.9$ using~$12$~cores.

The rest of the paper is organized as follows. Chapter \ref{prelim} gives preliminaries, an overview over related work and details of the algorithms of Nagamochi~\etal\cite{nagamochi1992computing,nagamochi1994implementing} and Henzinger~\etal\cite{henzinger2018practical}, as we make use of their results. Our shared-memory parallel exact algorithm for the minimum cut problem is detailed in Chapter~\ref{cuts}. In Chapter~\ref{impl} we give implementation details and extensive experiments both on real-world and generated graphs. We conclude the paper in Chapter~\ref{conclusion}.

%=====================================================================
\section{Preliminaries}\label{s:preliminaries}
\label{prelim}

\subsection{Basic Concepts.}
Let $G = (V, E, c)$ be a weighted undirected graph with vertex set $V$, edge set $E \subset V \times V$ and
non-negative edge weights $c: E \rightarrow \MdN$. 
We extend $c$ to a set of edges $E' \subseteq E$ by summing the weights of the edges; that is, $c(E')\Is \sum_{e=\{u,v\}\in E'}c(u,v)$. We apply the same notation for single nodes and sets of nodes.
Let $n = |V|$ be the
number of vertices and $m = |E|$ be the number of edges in $G$. The \emph{neighborhood}
$N(v)$ of a vertex $v$ is the set of vertices adjacent to $v$. The \emph{weighted degree} of a vertex is the sum of the weight of its incident edges. For brevity, we simply call this the \emph{degree} of the vertex.
For a set of vertices $A\subseteq V$, we denote by $E[A]\Is \{(u,v)\in E \mid u\in A, v\in V\setminus A\}$; that is, the set of edges in $E$ that start in $A$ and end in its complement.
A cut $(A, V
\setminus A)$ is a partitioning of the vertex set $V$ into two non-empty
\emph{partitions} $A$ and $V \setminus A$, each being called a \emph{side} of the cut. The \emph{capacity} of a cut $(A, V
\setminus A)$ is $c(A) = \sum_{(u,v) \in E[A]} c(u,v)$.
A \emph{minimum cut} is a cut $(A, V
\setminus A)$ that has smallest weight $c(A)$ among all cuts in $G$. We use $\lambda(G)$ (or simply
$\lambda$, when its meaning is clear) to denote the value of the minimum cut
over all $A \subset V$. For two vertices $s$ and $t$, we denote $\lambda(G,s,t)$ as the smallest cut of $G$, where $s$ and $t$ are on different sides of the cut. The connectivity $\lambda(G,e)$ of an edge $e=(s,t)$ is defined as $\lambda(G,s,t)$, the connectivity of its incident vertices. This is also known as the \emph{minimum s-t-cut} of the graph or the \emph{connectivity} or vertices $s$ and $t$. At any point in the execution of a minimum cut algorithm,
$\hat\lambda(G)$ (or simply $\hat\lambda$) denotes the lowest upper bound of the
minimum cut that an algorithm discovered until that point. 
For a vertex $u \in V$ with minimum vertex degree, the size of the \emph{trivial cut} $(\{u\}, V\setminus \{u\})$ is equal to the vertex degree of $u$.
Hence, the minimum vertex degree $\delta(G)$ can serve as initial~bound.

Many algorithms tackling the minimum cut problem use \emph{graph contraction}.
Given
an edge $(u, v) \in E$, we define $G/(u, v)$ to be the graph after \emph{contracting
edge} $(u, v)$. In the contracted graph, we delete vertex $v$ and all edges
incident to this vertex. For each edge $(v, w) \in E$, we add an edge $(u, w)$
with $c(u, w) = c(v, w)$ to $G$ or, if the edge already exists, we give it the edge
weight $c(u,w) + c(v,w)$.

\subsection{Related Work.}
\label{related}

We review algorithms for the global minimum cut and related problems. A closely related problem is the \textit{minimum s-t-cut} problem, which asks
for a minimum cut with nodes $s$ and $t$ in different partitions. Ford and
Fulkerson~\cite{ford1956maximal} proved that minimum $s$-$t$-cut 
is equal to maximum $s$-$t$-flow. Gomory and Hu~\cite{gomory1961multi}
observed that the (global) minimum cut can be computed with $n-1$ minimum
$s$-$t$-cut computations.
For the following decades, this result by Gomory and Hu was used to find better
algorithms for global minimum cut using improved maximum flow
algorithms~\cite{karger1996new}. One of the fastest known maximum flow
algorithms is the push-relabel algorithm~\cite{goldberg1988new} by Goldberg~and~Tarjan.

Hao and Orlin~\cite{hao1992faster} adapt the push-relabel algorithm to pass
information to future flow computations. When a push-relabel iteration is finished,
they implicitly merge the source and sink to form a new sink and find a new
source. Vertex heights are maintained over multiple iterations of push-relabel. With
these techniques they achieve a total running time of $O(mn\log{\frac{n^2}{m}})$ for
a graph with $n$ vertices and $m$ edges, which is asymptotically equal to a
single run of the push-relabel algorithm.

Padberg and Rinaldi~\cite{padberg1990efficient} give a set of heuristics for
edge contraction. Chekuri~\etal\cite{Chekuri:1997:ESM:314161.314315} give an
implementation of these heuristics that can be performed in time linear in the
graph size. Using these heuristics it is possible to sparsify a graph while
preserving at least one minimum cut in the graph. If their algorithm does not
find an edge to contract, it performs a maximum flow computation, giving the
algorithm worst case running time $O(n^4)$. However, the heuristics can also be
used to improve the expected running time of other algorithms by applying them
on interim graphs~\cite{Chekuri:1997:ESM:314161.314315}.

Nagamochi \etal\cite{nagamochi1992computing,nagamochi1994implementing} give a minimum
cut algorithm which does not use any flow computations. Instead, their
algorithm uses maximum spanning forests to find a non-empty set of contractible
edges. This contraction algorithm is run until the graph is contracted into a single
node. The algorithm has a running time of $O(mn+n^2\log{n})$. Stoer and Wagner~\cite{stoer1997simple} give a simpler variant of the algorithm of
Nagamochi, Ono and Ibaraki~\cite{nagamochi1994implementing}, which has a
the same asymptotic time complexity. The performance of this algorithm on
real-world instances, however, is significantly worse than the performance of the
algorithms of Nagamochi, Ono and Ibaraki or Hao and
Orlin, as shown in experiments conducted by J\"unger \etal\cite{junger2000practical}.
Both the algorithms of Hao and Orlin, and  Nagamochi, Ono and Ibaraki achieve close to linear running time on most benchmark instances~\cite{Chekuri:1997:ESM:314161.314315,junger2000practical}.
To the best of our knowledge, there are no
parallel implementation of either algorithm. Both of the algorithms do not have a straightforward parallel~implementation.

Kawarabayashi and Thorup~\cite{kawarabayashi2015deterministic} give a
deterministic near-linear time algorithm for the minimum cut problem, which runs
in $O(m \log^{12}{n})$. Their algorithm works by growing contractible regions
using a variant of PageRank~\cite{page1999pagerank}. It was later
improved by Henzinger~\etal\cite{henzinger2017local} to run in $O(m
\log^2{n} \log \log^2 n)$ time.

Based on the algorithm of
Nagamochi, Ono and Ibaraki, Matula~\cite{matula1993linear} gives a
$(2+\varepsilon)$-approximation algorithm for the minimum cut problem. The algorithm contracts more edges than the algorithm of
Nagamochi, Ono and Ibaraki to guarantee a linear time complexity while still
guaranteeing a $(2+\varepsilon)$-approximation~factor. Karger and
Stein~\cite{karger1996new} give a randomized Monte Carlo algorithm based on random edge contractions. This algorithm returns the minimum cut with high probability and a larger~cut~otherwise. In experiments, the algorithm was often outperformed by Nagamochi~\etal and Hao and Orlin by orders of magnitude~\cite{Chekuri:1997:ESM:314161.314315,henzinger2018practical,junger2000practical}.

\subsection{Nagamochi, Ono and Ibaraki's Algorithm.}
\label{sec:noi}
\label{capforest}
We discuss the algorithm by Nagamochi, Ono and Ibaraki~\cite{nagamochi1992computing,nagamochi1994implementing}
in greater detail since our work makes use of the tools proposed by those authors.
The intuition behind the algorithm is as follows: imagine you have an unweighted graph with minimum cut  value exactly one. 
Then any spanning tree must contain at least one edge of each of the minimum cuts. 
Hence, after computing a spanning tree, every remaining edge can be contracted without losing the minimum cut.
Nagamochi, Ono and Ibaraki extend this idea to the case where the graph can have edges with positive weight as well as the case in which the minimum cut is bounded by $\hat \lambda$.
The first observation is the following: assume that you already found a cut in the current graph of size~$\hat \lambda$ and you want to find a out whether there is a cut of size~$< \hat \lambda$. Then the contraction process only needs to ensure that the contracted graph contains all cuts having a value \emph{strictly} smaller than~$\hat \lambda$. 
To do so, Nagamochi, Ono and Ibaraki build \emph{edge-disjoint maximum spanning forests} and contract all edges that are not in one of the~$\hat\lambda - 1$ first spanning forests, as those connect vertices that have connectivity at least $\hat\lambda$. Note that the edge-disjoint maximum spanning forest \emph{certifies} for any edge $e=\{u,v\}$ that is not in the forest that the minimum cut between $u$ and $v$ is at least $\hat \lambda$. Hence, the edge can be ``safely'' contracted.
As weights are integer, this guarantees that the contracted graph still contains all cuts that are \emph{strictly} smaller than~$\hat\lambda$. 
Since it would be inefficient to directly compute $\hat \lambda - 1$ edge disjoint maximum spanning trees, the authors give a modified algorithm to be able to detect contractable edges faster. 
This is done by computing a lower bound on the connectivity of the endpoints of an edge which serves as a certificate for a edge to be contractable.
The algorithm has worst case running time
$\Oh{mn+n^2\log n}$. In experimental evaluations~\cite{Chekuri:1997:ESM:314161.314315,junger2000practical,henzinger2018practical} it is one of the fastest exact minimum cut algorithms, both on real-world and generated instances.

We now dive  into more details of the algorithm. 
To find contractable edges, the algorithm uses a modified \emph{breadth-first graph traversal} (BFS) algorithm~\cite{nagamochi1992computing,nagamochi1994implementing}. 
More precisely, the algorithm starts at an arbitrary vertex.
In each step, the algorithm visits (scans) the vertex $v$ that is most strongly connected to the already visited vertices. For this purpose a priority queue $\mathcal{Q}$ is used, in which the connectivity strength of each vertex $r: V \to \MdR$ to the already discovered vertices is used as a key. 
When scanning a vertex $v$, the value $r(w)$ is kept up to date  for every unscanned neighbor $w$ of $v$ by setting \ie $r(w) := r(w) + c(e)$.
Moreover, for each such edge $e = (v,w)$, the algorithm computes a lower bound~$q(e)$ for the connectivity, \ie the smallest cut $\lambda(G, v, w)$, which places $v$
and $w$ on different sides of the cut.
More precisely, as shown by \cite{nagamochi1994implementing,nagamochi1992computing}, if the vertices are scanned in a certain order (the order used by the algorithm), then $r(w)$ is a lower bound on $\lambda(G,v,w)$.

For an edge that has connectivity $\lambda(G,v,w) \geq \hat\lambda$, we know that there is no cut smaller than $\hat\lambda$ that places $v$ and $w$ in different partitions. If an edge $e$ is not in a given cut $(A,V \setminus A)$, it can be contracted without affecting the cut. Thus, we can contract edges with connectivity at least $\hat\lambda$ without losing any cuts smaller than $\hat\lambda$. As $q(e) \leq \lambda(G,u,v)$ (lower bound), all edges with $q(e) \geq \hat\lambda$ are contracted.

Afterwards, the algorithm continues on the contracted
graph. A single iteration of the subroutine can be performed in $\Oh{m+n \log
n}$. The authors show that in each BFS run, at least one edge of the graph can
be contracted~\cite{nagamochi1992computing}.
This yields a total running time of $O(mn+n^2 \log n)$. 
However, in practice the number of iterations is typically much less than $n-1$, often it is proportional to  $\log n$.

\subsection{Inexact Shared-Memory Minimum Cuts.}

\texttt{VieCut} is a multilevel algorithm that uses a shared-memory parallel implementation of the label propagation algorithm~\cite{raghavan2007near} to find clusters with a strong intra-cluster connectivity. The algorithm then contracts these clusters as it is assumed that the minimum cut does not split a cluster, as the vertices in a cluster are strongly interconnected with each other. This contraction is followed by a linear-work shared memory run of the Padberg-Rinaldi local tests for contractible edges~\cite{padberg1990efficient}. This whole process is repeated until the graph has only a constant amount of vertices left and can be solved by the algorithm of Nagamochi~\etal\cite{nagamochi1994implementing} exactly.

While \texttt{VieCut} can not guarantee optimality or even a small approximation ratio, in practice the algorithm finds near-optimal minimum cuts, often even the exact minimum cut, very quickly and in parallel. The algorithm can be implemented to have sequential running time $\mathcal{O}(n+m)$.

\section{Fast Exact Minimum Cuts}
\label{cuts}

In this section we detail our shared-memory algorithm for the minimum cut problem that is based on the algorithms of Nagamochi~\etal\cite{nagamochi1992computing,nagamochi1994implementing} and Henzinger~\etal\cite{henzinger2018practical}. We aim to modify the algorithm of Nagamochi~\etal\cite{nagamochi1994implementing} in order to find exact minimum cuts faster and in parallel. Their algorithm uses a routine described above in Section~\ref{capforest}, called CAPFOREST in their original work, in order to compute a lower bound $q(e)$ of the connectivity $\lambda(G,u,v)$ for each~edge~$e=(u,v)$.

If the connectivity between two vertices is larger than the current upper bound for the minimum cut, then it can be contracted. That also means that edges $e$ with $q(e) \geq \hat\lambda$ can be safely contracted, The algorithm is guaranteed to find at least one such edge.

We start this section with optimizations to the sequential algorithm. First we use a recently published \emph{inexact} algorithm to lower the minimum cut upper bound $\hat \lambda$. This enables us to save work and to perform contractions more quickly. 
We then give different implementations of the priority queue $\mathcal{Q}$ and detail the effects of the choice of queue on the algorithm. We show that the algorithm remains correct, even if we limit the priorities in the queue to $\hat\lambda$, meaning that elements in the queue having a key larger than that will not be updated.
This significantly lowers the amount of priority queue operations necessary. Then we adapt the algorithm so that we are able to detect contractible edges in parallel efficiently.
Lastly, we put it everything together and present a full system description.

\subsection{Sequential Optimizations}
\label{seqopt}

\subsubsection{Lowering the Upper Bound $\hat\lambda$.} Note that the upper bound $\hat\lambda$ for the minimum cut is an important parameter for exact contraction based algorithms such as the algorithm \texttt{NOI} of Nagamochi~\etal\cite{nagamochi1994implementing}. The algorithm computes a lower bound for the connectivity of the two incident vertices of each edge and contracts all edges whose incident vertices have a connectivity of at least $\hat\lambda$. Thus it is possible to contract more edges if we manage to lower $\hat\lambda$ beforehand.

A trivial upper bound $\hat\lambda$ for the minimum cut is the minimum vertex degree, as it represents the trivial cut which separates the minimum degree vertex from all other vertices. We run \texttt{VieCut} to lower $\hat\lambda$ in order to allow us to find more edges to contract. Although \texttt{VieCut} is an \emph{inexact algorithm}, in most cases it already finds the minimum cut~\cite{henzinger2018practical} of the graph. As there are by definition no cuts smaller than the minimum cut, the result of \texttt{VieCut} is guaranteed to be at least as large as the minimum cut $\lambda$. As we set $\hat\lambda$ to the result of \texttt{VieCut} when running \texttt{NOI}, we can therefore guarantee~a~correct~result.

A similar idea is employed by the linear time $(2+\epsilon)$-approximation algorithm of Matula~\cite{matula1993linear}, which initializes the algorithm of Nagamochi~\etal\cite{nagamochi1994implementing} with $\hat\lambda = (\frac{1}{2} - \epsilon) \times $min degree.

\subsubsection{Bounded Priority Queues.}
\label{sec:lem}

Whenever we visit a vertex, we update the priority of all of its neighbors in $\mathcal{Q}$ by adding the respective edge weight. Thus, in total we perform $|E|$ priority queue increase-weight operations. In practice, many vertices reach priority values much higher than $\hat\lambda$ and perform many priority increases until they reach their final value. We limit the values in the priority queue by~$\hat\lambda$, \ie we do not update priorities that are already $\hat\lambda$. Lemma~\ref{lem:pq} shows that this does not affect correctness of the algorithm.

Let $\tilde q_G(e)$ be the value $q(e)$ assigned to $e$ in the modified algorithm on graph $G$ and let $\tilde r_G(x)$ be the $r$-value of a node $x$ in the modified algorithm on $G$.

\begin{lemma} \label{lem:pq}
  Limiting the values in the priority queue $\mathcal{Q}$ used in the CAPFOREST routine to a maximum of $\hat\lambda$ does not interfere with the correctness of the algorithm. For every edge $e=(v,w)$ with $\tilde q_G(e) \geq \hat\lambda$, it holds that $\lambda(G,e) \geq \hat\lambda$. Therefore the edge can be contracted.
\end{lemma}

 \begin{proof}
  As we limit the priority queue $\mathcal{Q}$ to a maximum value of $\hat\lambda$, we can not guarantee that we always pop the element with highest value $r(v)$ if there are multiple elements that have values $r(v) \geq \hat\lambda$ in $\mathcal{Q}$. However, we know that the vertex $x$ that is popped from $\mathcal{Q}$ is either maximal or has $r(x) \geq \hat\lambda$.

  We prove Lemma~\ref{lem:pq} by creating a graph $G'=(V,E,c')$ by lowering edge weights (possibly to $0$, effectively removing the edge) while running the algorithm, so that CAPFOREST on $G'$ visits vertices in the same order (assuming equal tie breaking) and assigns the same $q$ values as the modified algorithm~on~$G$.

  We first describe the construction of $G'$. We initialize the weight of all edges in graph $G'$ with the weight of the respective edge in $G$ and run CAPFOREST on $G'$. Whenever we check an edge $e = (x,y)$ and update a value $r_{G'}(y)$ , we check whether we would set $r_{G'}(y) > \hat\lambda$. If this is the case, \ie when $r_{G'}(y) + c(e) > \hat\lambda$, we set $c'(e)$ in $G'$ to $c(e) - (r_{G'}(y) - \hat\lambda)$, which is lower by exactly the value by which $r_{G}(y)$ is larger than $\hat\lambda$, and non-negative. Thus, $r_{G'}(y) = \hat\lambda$. As we scan every edge exactly once in a run of CAPFOREST, the weights of edges already scanned remain constant afterwards. This completes the construction of $G'$

  Note that during the construction of $G'$ edge weights were only decreased and never increased. Thus it holds that $\lambda(G',x,y) \leq \lambda(G,x,y)$ for any pair of nodes $(x,y)$. If we ran the unmodified CAPFOREST algorithm on $G'$ each edge would be assigned a value $q_{G'}(e)$ with $q_{G'}(e) \leq \lambda(G',e)$. Thus for every edge $e$ it holds that $q_{G'}(e) \leq \lambda(G',e) \leq \lambda(G,e)$.

  Below we will show that $\tilde q_G(e) = q_{G'}(e)$ for all edges $e$. It then follows that for all edges $e$ it holds that $\tilde q_G(e) \leq \lambda(G,e)$. This implies that if $\tilde q_G(e) \geq \hat\lambda$ then $\lambda(G,e) \geq \hat\lambda$, which is what we needed to show.

  It remains to show for all edges $e$ that $\tilde q_G(e) = q_{G'}(e)$. To show this claim we will show the following stronger claim. For any $i$ with $1 \leq i \leq m$ after the $(i-1)$th and before the $i$th scan of an edge the modified algorithm on $g$ and the original algorithm on $G'$ with the same tie breaking have visited all nodes and scanned all edges up to now in the same order and for all edges $e$ it holds that $\tilde q_G(e) = q_{G'}(e)$ (we assume that before scanning an edge $e$, $q(e) = 0$) and for all nodes $x$ it holds that $\tilde r_G(x) = r_{G'}(x)$. We show this claim by the induction on $i$.

  For $i=1$ observe that before the first edge scan $\tilde q_G(e) = q_{G'}(e) = 0$ for all edges $e$ and the same node is picked as first node due to identical tie breaking and the fact that $G=G'$ at that point. Now for $i > 1$ assume that the claim holds for $i-1$ and consider the scan of the $(i-1)$th edge. If for the $(i-1)$th edge scan a new node needs to be chosen from the priority queue by one of the algorithms then note that both algorithms will have to choose a node and they pick the same node $y$ as $\tilde r_G(x) = r_{G'}(x)$ for all nodes $x$. Then both algorithms scan the same incident edge of $y$ as in both algorithms the set of unscanned neighbors of $y$ is identical. If neither algorithm has to pick a new node then both have scanned the same edges of the same current node $y$ and due to identical tie breaking will pick the same next edge to scan. Let this edge be $(y,w)$. By induction $\tilde r_G(w) = r_{G'}(w)$ at this time. As $(y,w)$ is unscanned $c'(y,w) = c(y,w)$ which implies that $\tilde r_G(w) + c(y,w) = r_{G'}(w)+c'(y,w)$. If $\tilde r_G(w) + c(y,w) \leq \hat\lambda$ then the modified algorithm on $G$ and the original algorithm on $G'$ will set the $r$ value of $w$ to the same value, namely $\tilde r_G(w) + c(y,w)$. If $\tilde r_G(w) + c(y,w) > \hat\lambda$, then $\tilde r_G(w)$ is set to $\hat\lambda$ and $c'(y,w)$ is set to $c(y,w)-(r_{G'}(w)-\hat\lambda)$, which leads to $r_{G'}(w)$ being set to $\hat\lambda$. Thus $\tilde r_G(w)=r_{G'}(w)$ and by induction $\tilde r_G(x) = r_{G'}(x)$ for all $x$. Additionally the modified algorithm on $G$ sets $\tilde q_G(y,w) = \tilde r_G(w)$ and the original algorithm on $G'$ sets $q_{G'}(y,w)=r_{G'}(w)$. It follows that $\tilde q_G(y,w)=q_{G'}(y,w)$ and, thus, by induction $\tilde q_G(e) = q_{G'}(e)$ for all $e$. This completes the proof of the claim.
\end{proof}

Lemma~\ref{lem:pq} allows us to considerably lower the amount of priority queue operations, as we do not need to update priorities that are bigger than $\hat\lambda$. This optimization has even more benefit in combination with running VieCut to lower the upper bound $\hat\lambda$, as we directly lower the amount of priority queue operations.

\subsubsection{Priority Queue Implementations.}
\label{ssec:pq}

Nagamochi~\etal\cite{nagamochi1994implementing} use an addressable priority queue $\mathcal{Q}$ in their algorithm to find contractible edges. In this section we now address variants for the implementation of the priority queue. As the algorithm often has many elements with maximum priority in practice, the implementation of this priority queue can have major impact on the order of vertex visits and thus also on the edges that will be marked contractible.

\paragraph{Bucket Priority Queue.}
As our algorithm limits the values the priority queue to a maximum of $\hat \lambda$, we observe integer priorities in the range of $[0,\hat\lambda]$. Hence, we can use a bucket queue that is implemented as an array with $\hat\lambda$ buckets. In addition, the data structure keeps the id of the highest non-empty bucket, also known as the \emph{top bucket}, and stores the position of each vertex in the priority queue. Priority updates can be implemented by deleting an element from its bucket and pushing it to the bucket with the updated priority. This allows constant time access for all operations except for deletions of the maximum priority element, which have to check all buckets between the prior top bucket to the new top bucket, possibly up to $\hat\lambda$ checks. We give two possible implementations to implement the buckets so that they can store all elements with a given priority.

The first implementation, \texttt{BStack} uses a dynamic array (\texttt{std::vector}) as the container for all elements in a bucket. When we add a new element to the vector, we push it to the back of the array. \texttt{$\mathcal{Q}$.pop\_max()} returns the last element of the top bucket. Thus our algorithm will always next visit the element whose priority it just increased. The algorithm therefore does not fully explore all vertices in a local region.

The other implementation, \texttt{BQueue} uses a double ended queue (\texttt{std::deque}) as the container instead. A new element is pushed to the back of the queue and \texttt{$\mathcal{Q}$.pop\_max()} returns the \emph{first} element of the top bucket. This results in a variant of our algorithm, which performs closer to a breadth-first search in that it first explores the vertices that have been discovered earlier, \ie are closer to the source vertex in the graph.

\paragraph{Bottom-Up Binary Heap.}
A binary heap~\cite{williams1964heapsort} is a binary tree (implemented as an array, where element $i$ has its children in index $2i$ and $2i+1$) which fulfills the heap property, \ie each element has priority that is not lower than either of its children. Thus the element with highest priority is the root of the tree. The tree can be made addressable by using an array of indices, in which we save the position of each vertex. We use a binary heap using the bottom-up heuristics~\cite{wegener1993bottom}, in which we sift down holes that were created by the deletion of the top priority vertex. Priority changes are implemented by sifting the addressed element up or down in the tree. Operations have a running time of up to $\mathcal{O}(\log n)$ to sift an element up or down to fix the heap property.

In \texttt{$\mathcal{Q}$.pop\_max()}, the \texttt{Heap} priority queue does not favor either old or new elements in the priority queue and therefore this implementation can be seen as a middle ground between the two bucket priority queues.

\begin{algorithm*}[t]
  \begin{algorithmic}[1]
    \INPUT $G = (V,E,c) \leftarrow$ undirected graph $\hat\lambda \leftarrow$ upper bound for minimum cut, $\mathcal{T} \leftarrow$ shared array of vertex visits
    \OUTPUT $\mathcal{U} \leftarrow$ union-find data structure to mark contractible edges
    \State Label all vertices $v \in V$ ``unvisited'', blacklist $\mathcal{B}$ empty
    \State $\forall v \in V: r(v) \leftarrow 0$
    \State $\forall e \in E: q(e) \leftarrow 0$
    \State $\mathcal{Q} \leftarrow$ empty priority queue
    \State Insert random vertex into $\mathcal{Q}$
    \While{$\mathcal{Q}$ not empty}
    \State $x \leftarrow \mathcal{Q}$.pop\_max() \Comment{Choose unvisited vertex with highest priority}
    \State Mark $x$ ``visited''
    \If{$\mathcal{T}(x) = \text{True}$} \Comment{Every vertex is visited only once}
    \State $\mathcal{B}(x) \leftarrow $ True
    \Else
    \State $\mathcal{T}(x) \leftarrow \text{True}$
    \EndIf
    \State $\alpha \leftarrow \alpha + c(x) - 2 r(x)$

    \State $\hat\lambda \leftarrow min(\hat\lambda, \alpha)$
    \For{$e = (x,y) \leftarrow$ edge to vertex $y$ not in $\mathcal{B}$ and not visited}
    \If{$r(y) < \hat\lambda \leq r(y) + c(e)$}
    \State $\mathcal{U}$.union(x,y) \Comment{Mark edge $e$ to contract}
    \EndIf
    \State $r(y) \leftarrow r(y) + c(e)$
    \State $q(e) \leftarrow r(y)$
    \State $\mathcal{Q}(y) \leftarrow min(r(y), \hat\lambda)$

    \EndFor

    \EndWhile

  \end{algorithmic}
  \caption{\label{algo:parnoi} Parallel CAPFOREST}
\end{algorithm*}

\subsection{Parallel CAPFOREST.}
\label{ssec:pmcap}

We modify the algorithm in order to quickly find contractible edges using shared-memory parallelism. The pseudocode can be found in Algorithm~\ref{algo:parnoi}. Pseudocode for the original CAPFOREST algorithm can be found in Algorithm~\ref{algo:cap} in Appendix~\ref{cap}.
The proofs in this section show that the modifications do not violate the correctness of the algorithm.
Detailed proofs for the original CAPFOREST algorithm and the modifications of Nagamochi~\etal for weighted graphs can be~found~in~\cite{nagamochi1994implementing}.

\begin{figure}[t!]
  \centering
  \includegraphics[height=.18\textheight,width=.4\textwidth]{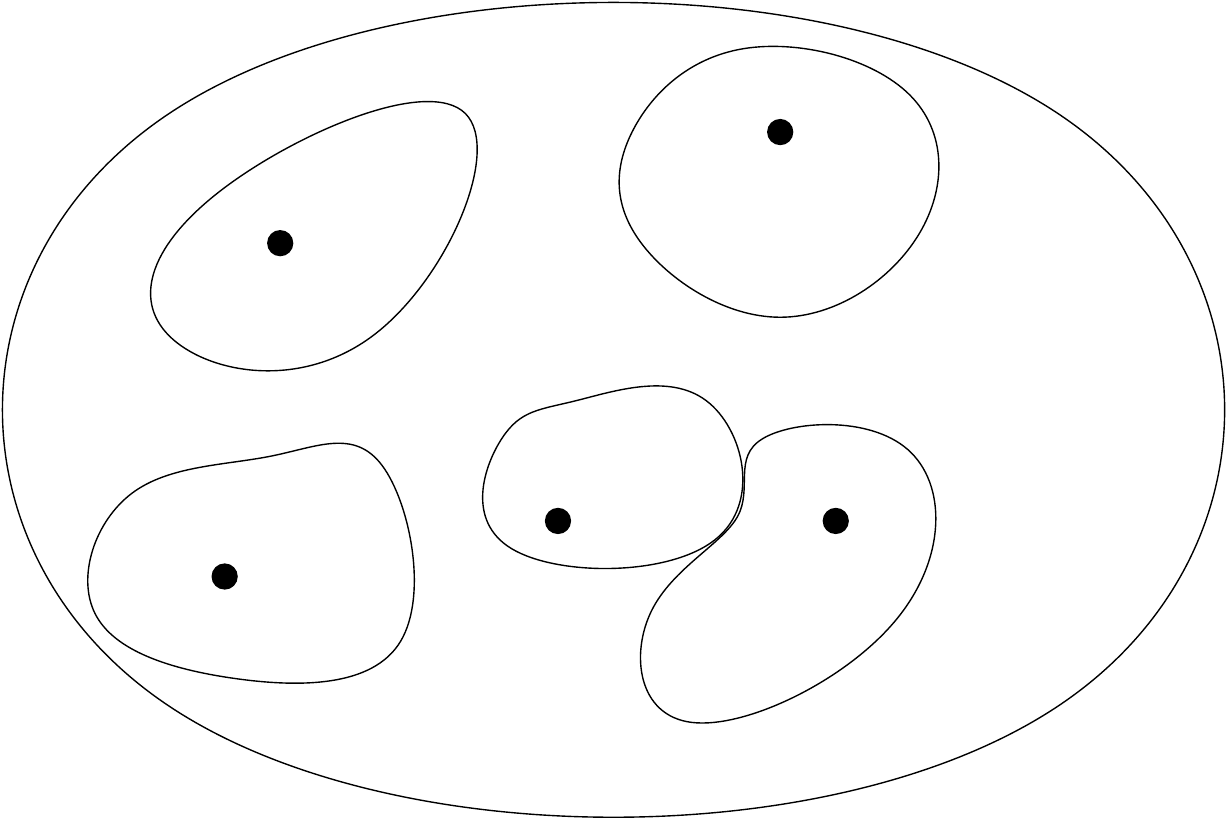}
  \caption{\label{fig:parmc} Example run of Algorithm~\ref{algo:parnoi}. Every process starts at a random vertex and scans region around the start vertex. These regions do not overlap.}  \end{figure}

The idea of the our algorithm is as follows: We aim to find contractible edges using shared-memory parallelism. Every processor selects a random vertex and runs Algorithm~\ref{algo:parnoi}, which is a modified version of CAPFOREST~\cite{nagamochi1992computing,nagamochi1994implementing} where the priority values are limited to $\hat\lambda$, the current upper bound of the size of the minimum cut. We want to find contractible edges without requiring that every process looks at the whole graph. To achieve this, every vertex will only be visited by one process. Compared to limiting the amount of vertices each process visits this has the advantage that we also scan the vertices in sparse regions of the graph which might otherwise not be scanned by any process.

Figure~\ref{fig:parmc} shows an example run of Algorithm~\ref{algo:parnoi} with~$p=5$. Every process randomly chooses a start vertex and performs Algorithm~\ref{algo:parnoi} on it to ``grow a region'' of~scanned~vertices.

As we want to employ shared-memory parallelism to speed up the algorithm, we share an array $\mathcal{T}$ between all processes to denote whether a vertex has already been visited. Every process has a blacklist $\mathcal{B}$ to mark nodes which were already scanned by another process and therefore not explored by this process. For every vertex~$v$ we keep a value $r(v)$, which denotes the total weight of edges connecting $v$ to already scanned vertices. Over the course of a run of the algorithm, every edge~$e~=~(v,w)$ is given a value $q(e)$ (equal to $r(w)$ right after scanning $e$) which is a lower bound for the smallest cut $\lambda(G,v,w)$.  We mark an edge $e$ as contractible (more accurately, we union the incident vertices in the shared concurrent union-find data structure~\cite{anderson1991wait}), if $q(e) \geq \hat\lambda$. Note that this does not modify the graph, it just remembers which nodes to collapse. The actual node collapsing happens in a postprocessing step. Nagamochi and Ibaraki showed \cite{nagamochi1994implementing} that contracting only the edges that fulfill the condition in line $17$ is equivalent.

If a vertex $v$ has already been visited by another process, it will not be visited by any other workers. A process that tries to visit $v$ after it has already been visited locally blacklists $v$ by setting $B(v)$ to true and does not visit the vertex. Subsequently, no more edges incident to $v$ will be marked contractible by this process. This is necessary to ensure correctness of the algorithm. As the set of disconnected edges is different depending on the start vertex, we looked into visiting every vertex by a number of processes up to a given parameter to find more contractible edges. However, this did generally result in higher total running times and thus we only visit every vertex once.

After all processes are finished, every vertex was visited exactly once (or possibly zero times, if the graph is disconnected). On average, every process has visited roughly $\frac{n}{p}$ vertices and all processes finish at the same time. We do not perform any form of locking of the elements of $\mathcal{T}$, as this would come with a running time penalty for every write and the worst that can happen with concurrent writes is that a vertex is visited more often, which does not affect correctness of the algorithm.

However, as we terminate early and no process visits every vertex, we can not guarantee anymore that the algorithm actually finds a contractible edge. However, in practice, this only happens if the graph is already very small ($<50$ vertices in all of our experiments). We can then run the CAPFOREST routine which is guaranteed to find at least one edge to contract.

In line $14$ and $15$ of Algorithm~\ref{algo:parnoi} we compute the value of the cut between the scanned and unscanned vertices and update $\hat\lambda$ if this cut is smaller than it. For more details on this we refer the reader to~\cite{nagamochi1994implementing}.

In practice, many vertices reach values of $r(y)$ that are much higher than $\hat\lambda$ and therefore need to update their priority in $\mathcal{Q}$ often. As previously detailed, we limit the values in the priority queue by $\hat\lambda$ and do not update priorities that are already greater or equal to~$\hat\lambda$. This allows us to considerably lower the amount of priority queue operations per vertex.

\begin{theorem}
  Algorithm~\ref{algo:parnoi} is correct.
\end{theorem}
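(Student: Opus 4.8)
The plan is to establish the two obligations that ``correct'' entails: (i) the running bound $\hat\lambda$ always stays a valid upper bound on $\lambda(G)$, and (ii) every edge that a process unites in line~17 has connectivity at least the current $\hat\lambda$, so that contracting all marked edges simultaneously preserves every cut of value strictly below $\hat\lambda$, and in particular one minimum cut. Obligation (i) I would dispose of quickly: following the accounting of Nagamochi~\etal\cite{nagamochi1994implementing}, the quantity $\alpha$ accumulated in line~14 equals the capacity in $G$ of the cut separating the vertices already scanned by the current process from the remaining vertices, so line~15 only ever lowers $\hat\lambda$ to the value of a genuine cut and the invariant $\hat\lambda\geq\lambda(G)$ is maintained.

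For obligation (ii) I would first reduce the parallel claim to a per-process claim. The shared array $\mathcal{T}$ is consulted only to decide visits and is never locked, so the sole effect of a data race is that a vertex is scanned by more than one process; this can only cause additional union calls, never falsify an existing one. Likewise the concurrent union-find $\mathcal{U}$ merely accumulates the calls issued by the processes. Hence it suffices to prove that each single $\mathcal{U}.\mathrm{union}(x,y)$ emitted by one process $P$ satisfies $\lambda(G,x,y)\geq\hat\lambda$ for the value of $\hat\lambda$ in force when the call is made.

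To prove the per-process claim I would reuse the construction from the proof of Lemma~\ref{lem:pq}. Let $G_P$ be obtained from $G$ by deleting every edge incident to a vertex that $P$ blacklists, i.e.\ keeping only the edges between vertices that $P$ actually scans. Because the loop in line~16 ranges exactly over neighbors that are neither in $\mathcal{B}$ nor already visited, $P$'s pop order, its $r$-values and its $q$-values should coincide step for step with those of the bounded CAPFOREST run on $G_P$ under the same tie-breaking; this I would verify by the same induction over scan steps used for Lemma~\ref{lem:pq}. Lemma~\ref{lem:pq} applied to $G_P$ then gives, for every marked edge $e$, that $\lambda(G_P,e)\geq\hat\lambda$. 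Since $G_P$ arises from $G$ purely by edge deletions, connectivity cannot increase, so $\lambda(G,e)\geq\lambda(G_P,e)\geq\hat\lambda$, which is the safety condition. Finally I would combine the individual safe contractions into a simultaneous one by the standard argument of Nagamochi~\etal\cite{nagamochi1994implementing}: an edge of connectivity at least $\hat\lambda$ has both endpoints on the same side of every cut of value below $\hat\lambda$, a property stable under the transitive closure recorded by $\mathcal{U}$.

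The step I expect to be hardest is justifying that $P$'s execution is faithfully a bounded-CAPFOREST run on a single fixed graph $G_P$, since $G_P$ is only fully determined at the end of the run (it depends on which vertices the races cause $P$ to blacklist) while both the scan order and the shared, monotonically decreasing $\hat\lambda$ are generated on the fly. I would handle the decreasing $\hat\lambda$ by noting that the direction of the inequalities is favorable: the threshold tested when an edge is marked is never below the final bound, so a union justified against an earlier, larger $\hat\lambda$ remains valid against the smaller final one, and lowering the priority cap during the run only removes further weight in the $G_P$ construction without disturbing $\lambda(G,e)\geq\lambda(G_P,e)$. The genuinely delicate point is the window in which $P$ may increment $r(y)$ for a neighbor $y$ that it only later pops and blacklists; here I must confirm that the tests in line~16 prevent $P$ from ever emitting a union on an edge to such a $y$ --- precisely the guarantee the authors flag as ``necessary to ensure correctness'' --- so that $P$'s emitted unions match exactly those of the $G_P$ run and the reduction to Lemma~\ref{lem:pq} goes through.
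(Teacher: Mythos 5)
Your overall architecture matches the paper's: reduce to a single process (commutativity of \texttt{Union}, races on $\mathcal{T}$ only causing extra visits), model blacklisting by deleting edges incident to blacklisted vertices, invoke Lemma~\ref{lem:pq} (the weight-lowering construction) on the reduced graph, and finish by monotonicity of connectivity under deletions. The genuine gap sits exactly at the point you flag as delicate, and it does not close the way you hope. The tests in line~16 do \emph{not} prevent $P$ from emitting a union on an edge $(x,y)$ whose endpoint $y$ it later blacklists: the test only filters vertices already in $\mathcal{B}$ or already visited at scan time, so $P$ may scan $(x,y)$, set $q(x,y)\geq\hat\lambda$ and call $\mathcal{U}.\mathrm{union}(x,y)$ while $y$ is still unvisited, and only afterwards pop $y$, find $\mathcal{T}(y)$ already set by another process, and blacklist it. (The paper's wording that ``\emph{subsequently}, no more edges incident to $v$ will be marked contractible'' concedes precisely this: edges scanned before the blacklisting can be, and are, marked.) Consequently your single end-of-run graph $G_P$, which deletes all edges incident to every vertex $P$ ever blacklists, does not reproduce $P$'s run step for step: the union on $(x,y)$ has no counterpart in the bounded CAPFOREST run on $G_P$, where that edge does not exist, so Lemma~\ref{lem:pq} applied to $G_P$ says nothing about it, and unions cannot be revoked after the fact.

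The paper avoids this by making the comparison graph per \emph{edge} rather than per process: for each scanned edge $e$ it takes $\bar G$ to be $G$ with the vertices of $\mathcal{B}_e$, the blacklist \emph{at the time $e$ is scanned}, and their incident edges removed. Since a blacklisted vertex is abandoned immediately and contributes to no $r$-value other than its own, the prefix of $P$'s run up to the scan of $e$ coincides with a CAPFOREST run on $\bar G$; and since $q(e)$ is set exactly once, at scan time, vertices blacklisted afterwards are irrelevant for $e$. This yields $q(e)\leq\lambda(\bar G,x,y)\leq\lambda(G,x,y)$ by the same monotonicity step you use. If you replace your fixed $G_P$ by this time-indexed family (equivalently, freeze the blacklist per scan step inside your induction), your reduction to Lemma~\ref{lem:pq} goes through. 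The remainder of your proposal is sound and consistent with the paper: per-call safety of unions against the current, larger $\hat\lambda$ subsumes early termination (the paper proves this separately as its sub-claim~(2)), and the paper likewise combines the blacklist deletions with the weight lowerings of Lemma~\ref{lem:pq} into a single graph $\bar G'$, just as you sketch; your explicit treatment of the on-the-fly decrease of $\hat\lambda$ and of the $\alpha$-accounting is care the paper's own proof leaves implicit.
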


  Algorithm~\ref{algo:parnoi} is correct. As Algorithm~\ref{algo:parnoi} is a modified variant of CAPFOREST~\cite{nagamochi1992computing,nagamochi1994implementing}, we use the correctness of their algorithm and show that our modifications can not result in incorrect results. In order to show this we need the following lemmas:

  \begin{lemma}
  \begin{enumerate}
    \item Multiple instances of Algorithm~\ref{algo:parnoi} can be run in parallel with all instances sharing a parallel union-find data structure.
    \item Early termination does not affect correctness
    \item For every edge $e = (v,w)$, where neither $v$ nor $w$ are blacklisted, $q(e)$ is a lower bound for the connectivity $\lambda(G, v, w)$, even if the set of blacklisted vertices $\mathcal{B}$ is not empty.
    \item When limiting the priority of a vertex in $\mathcal{Q}$ to $\hat\lambda$, it still holds that the vertices incident to an edge $e = (x,y)$ with $q(e) \geq \hat\lambda$ have connectivity $\lambda(G,x,y) \geq \hat\lambda$.
  \end{enumerate}
\end{lemma}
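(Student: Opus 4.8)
The plan is to establish the four parts of the lemma in an order that lets later parts reuse earlier ones, and then assemble them into correctness of Algorithm~\ref{algo:parnoi}. The guiding principle throughout is to separate \emph{soundness} (every edge we ever union has connectivity at least $\hat\lambda$ in the \emph{original} graph $G$) from \emph{completeness} (we actually find at least one such edge). Only soundness is needed for correctness; completeness is handled separately by the fallback to the unmodified CAPFOREST routine, which is guaranteed to find a contractible edge.

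I would begin with part (4), since it is essentially a restatement of Lemma~\ref{lem:pq}: capping the keys in $\mathcal{Q}$ at $\hat\lambda$ does not spoil the lower-bound property, so $q(e) \geq \hat\lambda$ still certifies $\lambda(G,x,y) \geq \hat\lambda$. This follows directly by invoking the already-established Lemma~\ref{lem:pq}, possibly after checking that its hypotheses are met in the parallel setting.

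The technical heart is part (3), showing that blacklisting preserves the lower-bound guarantee. Following the same device used in the proof of Lemma~\ref{lem:pq}, I would build an auxiliary graph $G'$ from the actual execution trace of a single process by (i) deleting every edge incident to a vertex that this process blacklists and (ii) additionally reducing edge weights exactly as in Lemma~\ref{lem:pq} to account for the $\hat\lambda$-cap. Both steps only ever \emph{decrease} edge weights, so $\lambda(G',x,y) \leq \lambda(G,x,y)$ for every pair by monotonicity of connectivity under edge removal. I would then argue by induction on the number of scanned edges, with identical tie-breaking, that the unmodified CAPFOREST run on $G'$ visits the same vertices in the same order and assigns each edge the same value $q_{G'}(e) = q(e)$; the key observation is that a skipped edge to a blacklisted vertex in $G$ corresponds to a genuinely absent edge in $G'$, so neither run updates the relevant $r$-value. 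Since the original CAPFOREST is correct on $G'$~\cite{nagamochi1994implementing}, we obtain $q(e) = q_{G'}(e) \leq \lambda(G',v,w) \leq \lambda(G,v,w)$ for every edge $e=(v,w)$ whose endpoints are not blacklisted, which is exactly the claim. The main obstacle is making the induction robust to the fact that the blacklist is populated \emph{dynamically} by other processes during the run; I would resolve this by fixing $G'$ relative to the observed sequence of blacklisting events rather than trying to define it a priori.

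With (3) and (4) in hand, parts (1) and (2) are comparatively short. For part (2), early termination only causes a process to stop before scanning the whole graph; it never marks an additional edge, so by (3) every edge it did mark is sound, and the only thing lost is the guarantee of finding one, i.e.\ completeness. For part (1), the only shared mutable state besides the visited-array $\mathcal{T}$ is the concurrent union-find structure, which is wait-free~\cite{anderson1991wait}; since each vertex is scanned by exactly one process, each process's $q$-values are valid lower bounds on connectivities in the \emph{same} underlying graph $G$ by (3), so every union it performs merges an edge $e$ with $\lambda(G,e) \geq \hat\lambda$. It then remains to observe that the union of arbitrarily many such safe contractions is itself safe: any cut of value strictly less than $\hat\lambda$ keeps the endpoints of every marked edge together and therefore survives all contractions simultaneously, independently of how the processes interleave. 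Combining the four parts, Algorithm~\ref{algo:parnoi} marks only contractible edges and hence is correct.
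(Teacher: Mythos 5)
Your proposal is correct and takes essentially the same route as the paper's proof: part (4) by invoking Lemma~\ref{lem:pq}, part (3) by an auxiliary graph derived from the execution trace (edges incident to blacklisted vertices removed, weights reduced, connectivity only decreasing) combined with the correctness of unmodified CAPFOREST, and parts (1)--(2) by the subset-of-marked-edges and commutativity-of-\texttt{Union} arguments for the shared union-find structure. The only difference is organizational: you merge the blacklist deletion and the $\hat\lambda$-cap into a single graph $G'$ with one induction, whereas the paper first argues part (3) with a per-edge graph $\bar G$ defined from the blacklist $\mathcal{B}_e$ at scan time and only combines the two reductions into $\bar G'$ in its closing step.
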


\begin{proof}
  A run of the CAPFOREST algorithm finds a non-empty set of edges that can be contracted without contracting a cut with value less than $\hat\lambda$~\cite{nagamochi1992computing}. We show that none of our modifications can result in incorrect results:

  \begin{enumerate}
  \item The CAPFOREST routine can be started from an arbitrary vertex and finds a set of edges that can be contracted without affecting the minimum cut $\lambda$. This is true for any vertex $v \in V$. As we do not change the underlying graph but just mark contractible edges, the correctness is obviously upheld when running the algorithm multiple times starting at different vertices. This is also true when running the different iterations in parallel, as long as the underlying graph is not changed.

    Marking the edge $e=(u,v)$ as contractible is equivalent to performing a \texttt{Union} of vertices $u$ and $v$. The \texttt{Union} operation in a union-find data structure is commutative and therefore the order of unions is irrelevant for the final result. Thus performing the iterations successively has the same result as performing them in parallel.

  \item Over the course of the algorithm we set a value $q(e)$ for each edge $e$ and we maintain a value $\hat\lambda$ that never increases. We contract edges that have value~$q(e) \geq \hat\lambda$ at the time when $q(e)$ is set. For every edge, this value is set exactly once. If we terminate the algorithm prior to setting $q(e)$ for all edges, the set of contracted edges is a subset of the set of edges that would be contracted in a full run and all contracted edges $e$ fulfill $q(e) \geq \hat\lambda$ at termination. Thus, no edge contraction contracts a cut that is smaller than $\hat\lambda$.

    \item Let $e=(v,w)$ be an edge and let $\mathcal{B}_e$ be the set of nodes blacklisted at the time when $e$ is scanned. We show that for an edge $e = (v, w)$, $q(e) \leq \lambda(\bar G, v, w)$, where $\bar G = (\bar V, \bar E)$ with vertices $\bar V = V \backslash \mathcal{B}_e$ and edges $\bar E= \{e=(u,v) \in E: u \not\in \mathcal{B}_e $ and $ v \not\in \mathcal{B}_e \}$ is the graph $G$ with all blacklisted vertices and their incident edges removed. As the removal of vertices and edges can not increase edge connectivities $q_{\bar G}(e)\leq\lambda (\bar G, v, w) \leq \lambda(G,v,w)$ and $e$ is a contractible edge.

    Whenever we visit a vertex $b$, we decide whether we blacklist the vertex. If we blacklist the vertex $b$, we immediately leave the vertex and do not change any values $r(v)$ or $q(e)$ for any other vertex or edge. As vertex $b$ is marked as blacklisted, we will not visit the vertex again and the edges incident to $b$ only~affect~$r(b)$.

    As edges incident to any of the vertices in $\mathcal{B}_e$ do not affect~$q(e)$, the value of $q(e)$ in the algorithm with the blacklisted in $G$ is equal to the value of $q(e)$ in $\bar G$, which does not contain the blacklisted vertices in $\mathcal{B}_e$ and their incident edges. On $\bar G$ this is equivalent to a run of CAPFOREST without blacklisted vertices and due to the correctness of CAPFOREST~\cite{nagamochi1994implementing} we know that for every edge $e \in \bar E: q_{\bar G}(e) \leq \lambda(\bar G, v, w) \leq \lambda(G,v,w)$.

    Note that in $\bar G$ we only exclude the vertices that are in $\mathcal{B}_e$. It is possible that a node $y$ that was unvisited when $e$ was scanned might get blacklisted later, however, this does not affect the value of $q(e)$ as the value $q(e)$ is set when an edge is scanned and never modified afterwards.

\item Proof in Lemma~\ref{lem:pq}.

  \end{enumerate}

  We can combine the sub-proofs (3) and (4) by creating the graph $\bar G'$, in which we remove all edges incident to blacklisted vertices and decrease edge weights to make sure no $q(e)$ is strictly larger than $\hat\lambda$. As we only lowered edge weights and removed edges, for every edge between two not blacklisted vertices $e = (u,v)$, $q_G(e) \leq \lambda(\bar G',x,y) \leq \lambda(G,x,y)$ or $q_G(e) > \hat\lambda$ and thus we only contract contractible edges. As none of our modifications can result in the contraction of edges that should not be contracted, Algorithm~\ref{algo:parnoi} is correct.
\end{proof}

\begin{algorithm}[t!]
  \begin{algorithmic}[1]
    \INPUT $G = (V,E,c)$
    %\OUTPUT $\lambda \leftarrow$ value of minimum cut
    \State $\hat\lambda \leftarrow $ \texttt{VieCut}($G$), $G_C \leftarrow G$
    \While{$G_C$ has more than $2$ vertices}
    \State $\hat\lambda \leftarrow$ Parallel CAPFOREST($G_C,\hat\lambda$)
    \If{no edges marked contractible}
        \State  $\hat\lambda \leftarrow$ CAPFOREST($G_C,\hat\lambda$)
        \EndIf
        \State $G_C, \hat\lambda \leftarrow$ Parallel Graph Contract($G_C$)
        \EndWhile

        \State \Return $\hat\lambda$
  \end{algorithmic}
  \caption{\label{algo:parmc} Parallel Minimum Cut}

\end{algorithm}

 \begin{figure*}[b!]
%%IMPORT-DATA pq_rhg logs/pq_rhg
\resizebox{\textwidth}{!}{
  \begin{tikzpicture}
    \begin{axis}[
      name=plot1,
    title={Average Node Degree: $2^5$},
    xlabel={Number of Vertices},
    ylabel={Running Time per Edge [$(ns)$]},
    ymode=log,
    xmode=log,
    ]
%% MULTIPLOT(algo) SELECT POWER(2, CEIL(LOG(2, n))) AS x, AVG(time) / AVG(m) * 1000 * 1000.0 * 1000.0 AS y, MULTIPLOT
%% FROM pq_rhg WHERE graph LIKE '%_2%_5_%' GROUP BY MULTIPLOT,x ORDER BY MULTIPLOT,x
\addplot coordinates { (1.04858e+06,82.6098) (2.09715e+06,64.0682) (4.1943e+06,68.9686) (8.38861e+06,71.0581) (1.67772e+07,67.1392) (3.35544e+07,65.6851) };
\addlegendentry{algo=ks\_./misc/CUT/bin/ho};
\addplot coordinates { (1.04858e+06,155.834) (2.09715e+06,76.1902) (4.1943e+06,91.9993) (8.38861e+06,95.6736) (1.67772e+07,78.1644) (3.35544e+07,82.3399) };
\addlegendentry{algo=ks\_./misc/CUT/bin/ni\_nopr};
\addplot coordinates { (1.04858e+06,32.9103) (2.09715e+06,34.1231) (4.1943e+06,31.4984) (8.38861e+06,29.6682) (1.67772e+07,29.234) (3.35544e+07,28.0388) };
\addlegendentry{algo=noibucket,limit};
\addplot coordinates { (1.04858e+06,35.0876) (2.09715e+06,36.2828) (4.1943e+06,32.5431) (8.38861e+06,26.7875) (1.67772e+07,28.5107) (3.35544e+07,25.5876) };
\addlegendentry{algo=noififo,limit};
\addplot coordinates { (1.04858e+06,33.5397) (2.09715e+06,34.7009) (4.1943e+06,33.45) (8.38861e+06,31.4976) (1.67772e+07,31.5349) (3.35544e+07,31.2994) };
\addlegendentry{algo=noiheap};
\addplot coordinates { (1.04858e+06,33.0912) (2.09715e+06,34.8433) (4.1943e+06,32.9472) (8.38861e+06,31.7809) (1.67772e+07,31.785) (3.35544e+07,30.4664) };
\addlegendentry{algo=noiheap,limit};
\addplot coordinates { (1.04858e+06,42.9101) (2.09715e+06,50.6719) (4.1943e+06,49.364) (8.38861e+06,44.1133) (1.67772e+07,47.3919) (3.35544e+07,43.0878) };
\addlegendentry{algo=noiparheap};
\addplot coordinates { (1.04858e+06,49.9442) (2.09715e+06,47.7059) (4.1943e+06,54.1396) (8.38861e+06,50.0579) (1.67772e+07,51.3422) (3.35544e+07,54.1655) };
\addlegendentry{algo=noiparheap,limit};
\legend{}
\end{axis}

\begin{axis}[
  name=plot2,
  at=(plot1.right of south east),
    title={Average Node Degree: $2^6$},
    xlabel={Number of Vertices},
    ymode=log,
    xshift=6mm,
    log basis y={2},
    xmode=log,
    log basis x={2},
    ]
%% MULTIPLOT(algo) SELECT POWER(2, CEIL(LOG(2, n))) AS x, AVG(time) / AVG(m) * 1000 * 1000.0 * 1000.0 AS y, MULTIPLOT
%% FROM pq_rhg WHERE graph LIKE '%_6_%' GROUP BY MULTIPLOT,x ORDER BY MULTIPLOT,x
\addplot coordinates { (1.04858e+06,77.1528) (2.09715e+06,71.5663) (4.1943e+06,79.2212) (8.38861e+06,72.4351) (1.67772e+07,74.7277) };
\addlegendentry{algo=ks\_./misc/CUT/bin/ho};
\addplot coordinates { (1.04858e+06,137.518) (2.09715e+06,103.047) (4.1943e+06,144.691) (8.38861e+06,95.9546) (1.67772e+07,95.274) };
\addlegendentry{algo=ks\_./misc/CUT/bin/ni\_nopr};
\addplot coordinates { (1.04858e+06,70.1867) (2.09715e+06,63.2886) (4.1943e+06,51.1212) (8.38861e+06,52.212) (1.67772e+07,50.3563) (3.35544e+07,51.0307) };
\addlegendentry{algo=noibucket,limit};
\addplot coordinates { (1.04858e+06,80.4609) (2.09715e+06,71.4901) (4.1943e+06,59.2622) (8.38861e+06,58.7912) (1.67772e+07,57.0678) (3.35544e+07,57.6975) };
\addlegendentry{algo=noififo,limit};
\addplot coordinates { (1.04858e+06,65.2765) (2.09715e+06,57.2194) (4.1943e+06,47.3124) (8.38861e+06,46.4038) (1.67772e+07,44.6088) (3.35544e+07,45.9924) };
\addlegendentry{algo=noiheap};
\addplot coordinates { (1.04858e+06,65.9292) (2.09715e+06,56.7341) (4.1943e+06,46.8665) (8.38861e+06,46.0783) (1.67772e+07,44.4241) (3.35544e+07,46.0091) };
\addlegendentry{algo=noiheap,limit};
\addplot coordinates { (1.04858e+06,37.4344) (2.09715e+06,37.588) (4.1943e+06,36.4515) (8.38861e+06,41.9957) (1.67772e+07,34.5997) (3.35544e+07,37.0813) };
\addlegendentry{algo=noiparheap};
\addplot coordinates { (1.04858e+06,39.7053) (2.09715e+06,36.8551) (4.1943e+06,38.1337) (8.38861e+06,41.4551) (1.67772e+07,36.5725) (3.35544e+07,36.4937) };
\addlegendentry{algo=noiparheap,limit};

\legend{}
\end{axis}

\begin{axis}[
  name=plot3,
  at=(plot2.right of south east),
  xshift=6mm,
    title={Average Node Degree: $2^7$},
    xlabel={Number of Vertices},
    ymode=log,
    log basis y={2},
    xmode=log,
    log basis x={2},
    legend pos=outer north east,
    legend style={font=\Large},
    label style={font=\Large},
    title style={font=\Large},
    tick label style={font=\Large},
    ]
%% MULTIPLOT(algo) SELECT POWER(2, CEIL(LOG(2, n))) AS x, AVG(time) / AVG(m) * 1000 * 1000.0 * 1000.0 AS y, MULTIPLOT
%% FROM pq_rhg WHERE graph LIKE '%_7_%' GROUP BY MULTIPLOT,x ORDER BY MULTIPLOT,x
\addplot coordinates { (1.04858e+06,78.5836) (2.09715e+06,80.4086) (4.1943e+06,72.1249) (8.38861e+06,73.9495) (1.67772e+07,75.3108) };
\addlegendentry{algo=ks\_./misc/CUT/bin/ho};
\addplot coordinates { (1.04858e+06,153.097) (2.09715e+06,166.146) (4.1943e+06,98.9117) (8.38861e+06,105.439) (1.67772e+07,100.357) };
\addlegendentry{algo=ks\_./misc/CUT/bin/ni\_nopr};
\addplot coordinates { (1.04858e+06,82.008) (2.09715e+06,73.9579) (4.1943e+06,57.2105) (8.38861e+06,60.0016) (1.67772e+07,57.1866) (3.35544e+07,59.4987) };
\addlegendentry{algo=noibucket,limit};
\addplot coordinates { (1.04858e+06,94.5967) (2.09715e+06,83.2594) (4.1943e+06,66.7455) (8.38861e+06,67.647) (1.67772e+07,64.7941) (3.35544e+07,67.1024) };
\addlegendentry{algo=noififo,limit};
\addplot coordinates { (1.04858e+06,77.1384) (2.09715e+06,66.8968) (4.1943e+06,53.1356) (8.38861e+06,53.2701) (1.67772e+07,50.2053) (3.35544e+07,53.3701) };
\addlegendentry{algo=noiheap};
\addplot coordinates { (1.04858e+06,77.8587) (2.09715e+06,66.1106) (4.1943e+06,52.3681) (8.38861e+06,53.1689) (1.67772e+07,50.4094) (3.35544e+07,53.4516) };
\addlegendentry{algo=noiheap,limit};
\addplot coordinates { (1.04858e+06,36.9859) (2.09715e+06,37.2847) (4.1943e+06,36.0668) (8.38861e+06,42.734) (1.67772e+07,32.6537) (3.35544e+07,34.5157) };
\addlegendentry{algo=noiparheap};
\addplot coordinates { (1.04858e+06,38.5112) (2.09715e+06,35.6499) (4.1943e+06,37.2433) (8.38861e+06,41.5838) (1.67772e+07,35.3272) (3.35544e+07,34.502) };
\addlegendentry{algo=noiparheap,limit};
\legend{}
\end{axis}

 \begin{axis}[
   name=plot4,
   at=(plot3.right of south east),
   xshift=6mm,
   title={Average Node Degree: $2^8$},
   xlabel={Number of Vertices},
   ymode=log,
   log basis y={2},
   xmode=log,
   log basis x={2},
   legend pos=outer north east,
   legend style={font=\Large},
   label style={font=\Large},
   title style={font=\Large},
   tick label style={font=\Large},
   ]
%1 MULTIPLOT(algo) SELECT POWER(2, CEIL(LOG(2, n))) AS x, AVG(time) / AVG(m) * 1000 * 1000.0 * 1000.0 AS y, MULTIPLOT
%1 FROM pq_rhg WHERE graph LIKE '%_8_%' GROUP BY MULTIPLOT,x ORDER BY MULTIPLOT,x
\addplot coordinates { (1.04858e+06,82.7066) (2.09715e+06,83.8212) (4.1943e+06,87.7874) (8.38861e+06,92.1598) };
\addlegendentry{\texttt{HO-CGKLS}};
\addplot coordinates { (1.04858e+06,168.225) (2.09715e+06,179.57) (4.1943e+06,176.042) (8.38861e+06,176.159) };
\addlegendentry{\texttt{NOI-CGKLS}};
\addplot coordinates { (1.04858e+06,119.102) (2.09715e+06,89.5824) (4.1943e+06,91.4717) (8.38861e+06,131.74) (1.67772e+07,127.045) (3.35544e+07,132.021) };
\addlegendentry{\texttt{NOI$_{\hat\lambda}$-BStack}};
\addplot coordinates { (1.04858e+06,131.953) (2.09715e+06,98.1802) (4.1943e+06,101.698) (8.38861e+06,146.407) (1.67772e+07,141.292) (3.35544e+07,143.677) };
\addlegendentry{\texttt{NOI$_{\hat\lambda}$-BQueue}};
\addplot coordinates { (1.04858e+06,105.726) (2.09715e+06,82.9793) (4.1943e+06,84.048) (8.38861e+06,120.066) (1.67772e+07,117.155) (3.35544e+07,122.471) };
\addlegendentry{\texttt{NOI-HNSS}};
\addplot coordinates { (1.04858e+06,107.612) (2.09715e+06,82.6104) (4.1943e+06,85.2457) (8.38861e+06,120.056) (1.67772e+07,117.162) (3.35544e+07,124.348) };
\addlegendentry{\texttt{NOI$_{\hat\lambda}$-Heap}};
\addplot coordinates { (1.04858e+06,32.3483) (2.09715e+06,28.7585) (4.1943e+06,36.07) (8.38861e+06,31.4417) (1.67772e+07,38.7056) (3.35544e+07,36.768) };
\addlegendentry{\texttt{NOI-HNSS-VieCut}};
\addplot coordinates { (1.04858e+06,32.6933) (2.09715e+06,29.0362) (4.1943e+06,34.6802) (8.38861e+06,31.5548) (1.67772e+07,39.8551) (3.35544e+07,37.4883) };
\addlegendentry{\texttt{NOI$_{\hat\lambda}$-Heap-VieCut}};
%\legend{}
\end{axis}
\end{tikzpicture}
}
\caption{Total running time in nanoseconds per edge in RHG graphs. \label{fig:rhg}}
\end{figure*}
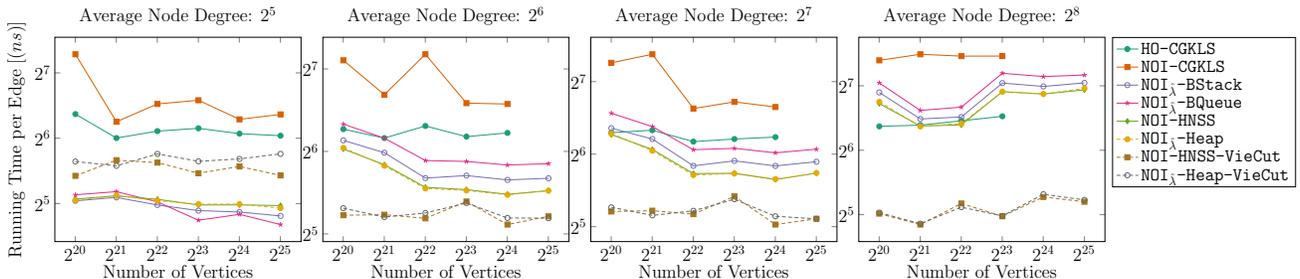%

\subsubsection*{Parallel Graph Contraction.}
\label{contract}

After using Algorithm~\ref{algo:parnoi} to find contractible edges, we use a concurrent hash table~\cite{maier2016concurrent} to generate the contracted graph $G_C = (V_C,E_C)$, in which each block in $\mathcal{U}$ is represented by a single vertex: first we assign each block a vertex ID in the contracted graph in $[0,|V_C|)$. For each edge~$e=(u,v)$, we compute a hash of the block IDs of~$u$ and $v$ to uniquely identify the edge in $E_C$. We use this identifier to compute the weights of all edges between blocks. If there are two blocks that each contain many vertices, there might be many edges between them and if so, the hash table spends considerable time for synchronization. We thus compute the weight of the edge connecting the two heavy blocks locally on each process and sum them up afterwards to reduce synchronization overhead. If the collapsed graph $G_C$ has a minimum degree of less than $\hat\lambda$, we update $\hat\lambda$ to the value~of~this~cut.

\subsection{Putting Things Together.}
\label{putting}

Algorithm~\ref{algo:parmc} shows the overall structure of the algorithm. We first run \texttt{VieCut} to find a good upper bound $\hat\lambda$ for the minimum cut. Afterwards, we run Algorithm~\ref{algo:parnoi} to find contractible edges. In the unlikely case that none were found, we run CAPFOREST~\cite{nagamochi1994implementing} sequentially to find at least one contractible edge. We create a new contracted graph using parallel graph contraction, as shown in Section~\ref{contract}. This process is repeated until the graph has only two vertices left. Whenever we encounter a collapsed vertex with a degree of lower than $\hat\lambda$, we update the upper bound. We return the smallest cut we encounter in this process.

If we also want to output the minimum cut, for each collapsed vertex $v_C$ in $G_C$ we store which vertices of $G$ are included in $v_C$. When we update $\hat\lambda$, we store which vertices are contained in the minimum cut. This allows us to see which vertices are on one side of the cut.

\begin{figure*}[t]
  \centering
  \includegraphics[width=.55\textwidth]{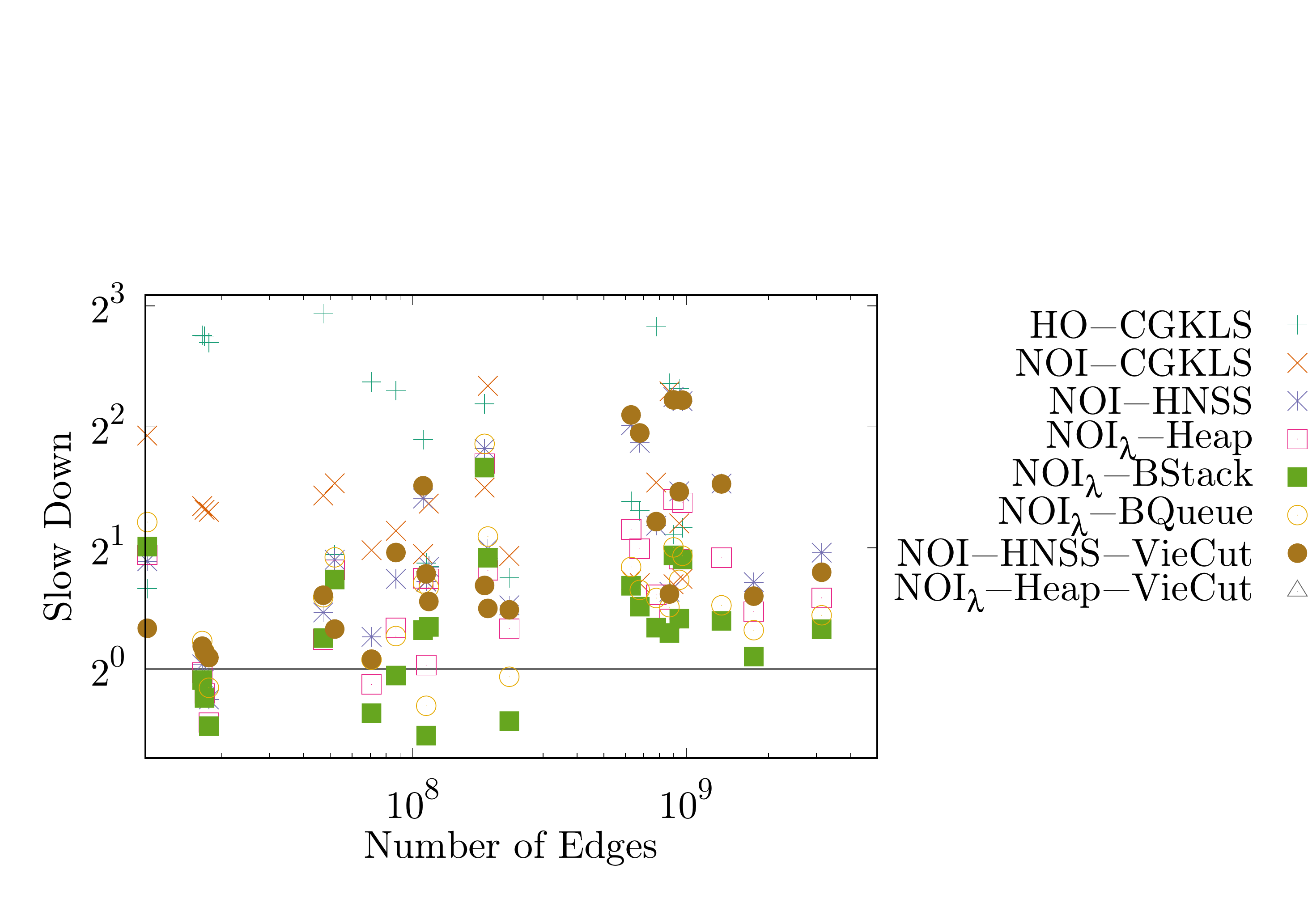}
  \hspace{-3mm}
  \raisebox{3mm}{
    \includegraphics[width=.4\textwidth ]{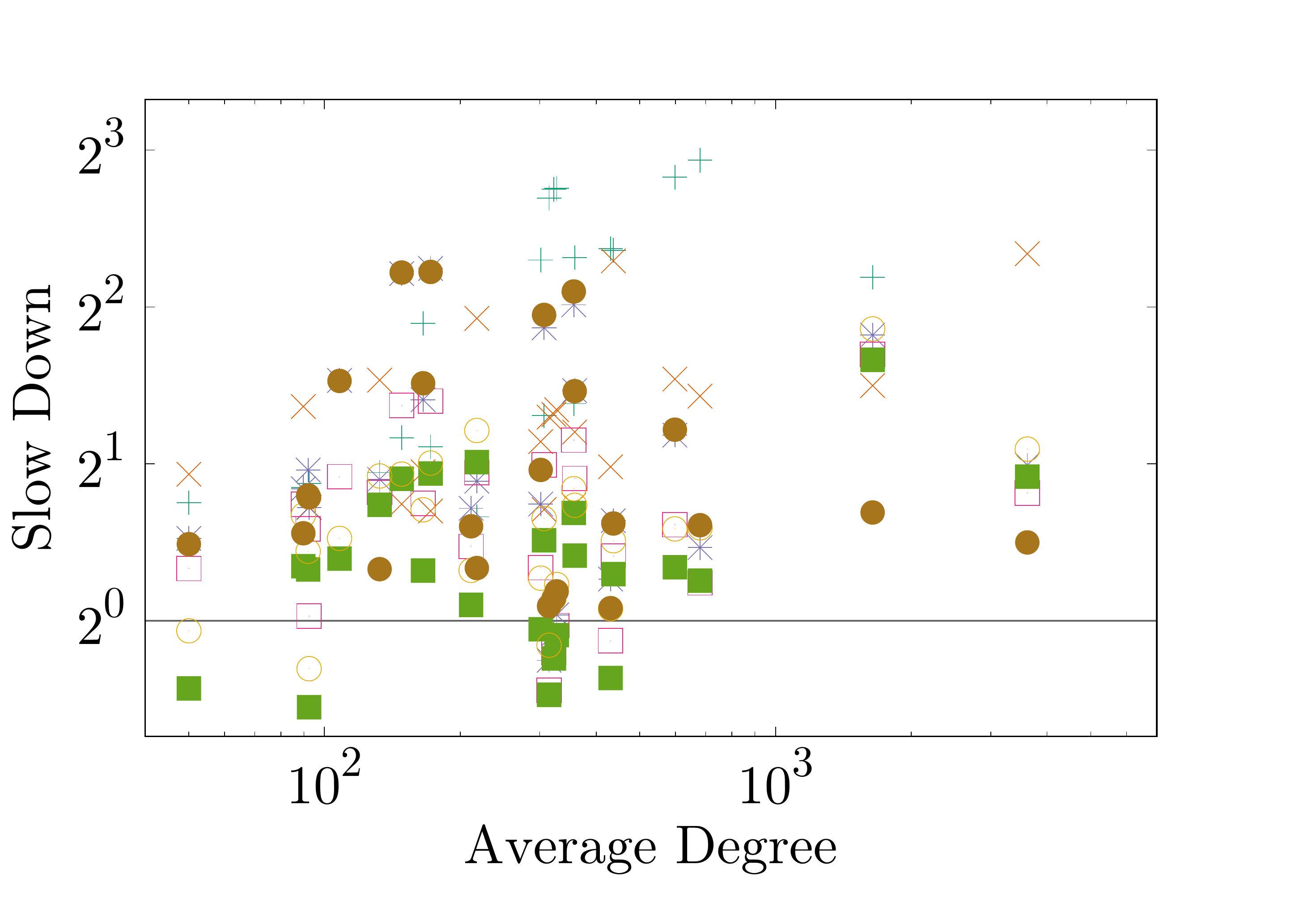}
  }
  \caption{\label{fig:edges} Total running time in real-world graphs, normalized by the running time of \texttt{NOI$_{\hat\lambda}$-Heap-VieCut}.}
\end{figure*}
\begin{figure}[t]
  \centering
  \includegraphics[width=.39\textwidth]{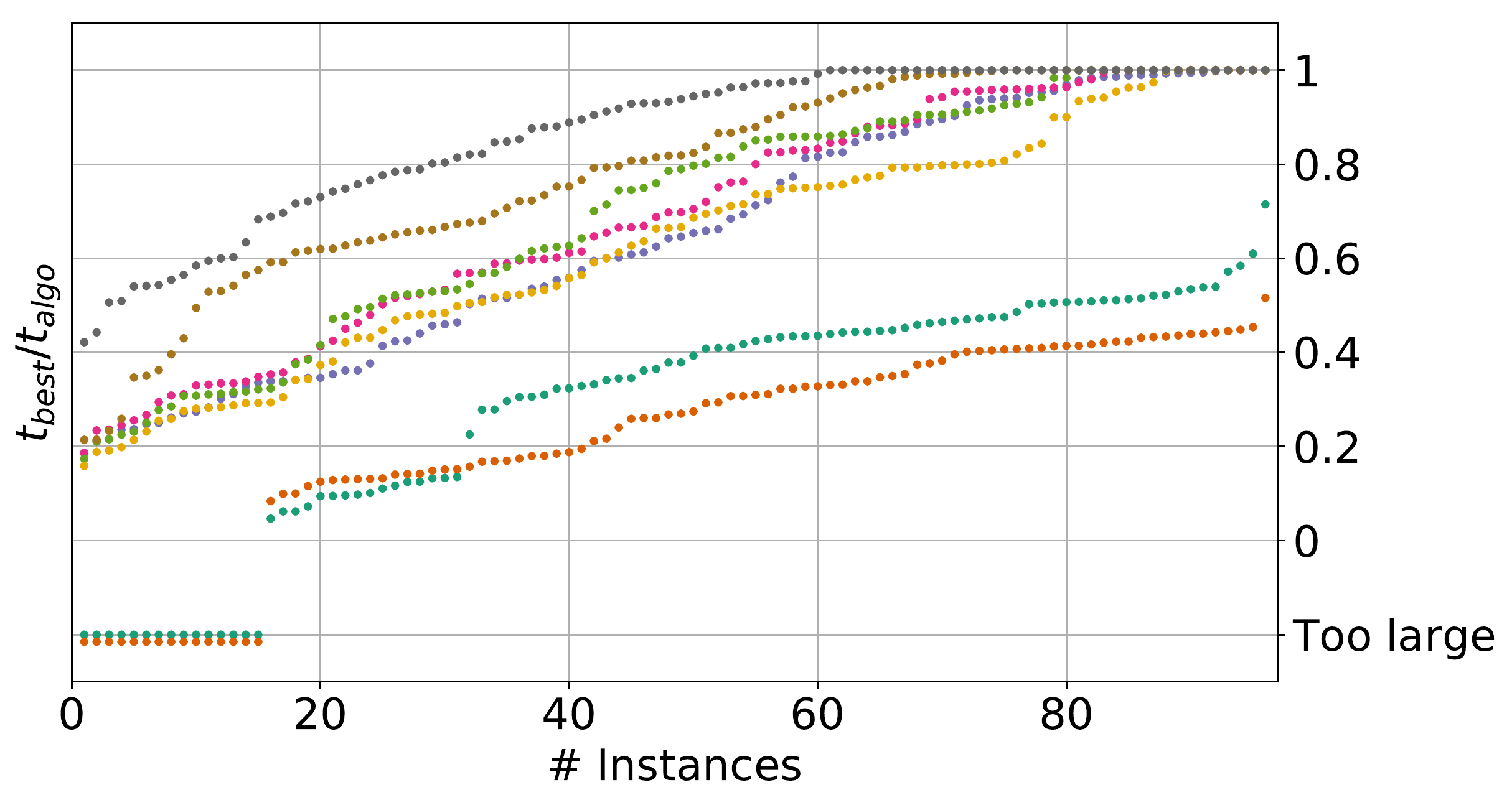}
  \caption{\label{fig:perf} Performance plot for all graphs (legend shared with Figure~\ref{fig:edges} above).}
\end{figure}

\section{Experiments and Results}
\label{impl}

\subsection{Experimental Setup and Methodology.}

We implemented the algorithms using \CC-17 and compiled all
codes using g++-7.1.0 with full optimization (\texttt{-O3}). Our experiments are conducted on a
machine with two Intel Xeon E5-2643 v4 with 3.4GHz with 6 CPU cores each and
1.5 TB RAM in total.
We perform five repetitions per instance and report average running~time.

\emph{Performance plots}
relate the fastest running time to the running time of each other algorithm on a per-instance basis.
For each algorithm, these ratios are sorted in increasing order. The plots show the ratio $t_\text{best}/t_\text{algorithm}$ on the y-axis.
A point close to zero indicates that the running time of the algorithm
was considerably worse than the fastest algorithm on the same instance. A value of one therefore indicates that the corresponding algorithm was one of the fastest algorithms to compute the solution.
Thus an algorithm is considered to outperform another algorithm if its corresponding ratio values are above those of the other algorithm.
In order to include instances that were too big for an algorithm, \ie some implementations are limited to 32bit integers, we set the corresponding ratio below \emph{zero}.

\subsubsection*{Algorithms.} There have been multiple experimental studies that compare exact algorithms for the minimum cut problem~\cite{Chekuri:1997:ESM:314161.314315,henzinger2018practical,junger2000practical}. All of these studies report that the algorithm of Nagamochi~\etal and the algorithm of Hao and Orlin outperform other algorithms, such as the algorithms of Karger and Stein~\cite{karger1996new} or the algorithm of Stoer and Wagner~\cite{stoer1997simple}, often by multiple orders of magnitude. Among others, we compare ourselfs against two available implementations of the sequential algorithm of Nagamochi~\etal\cite{nagamochi1992computing,nagamochi1994implementing}. Henzinger~\etal\cite{henzinger2018practical} give an implementation of the algorithm of Nagamochi~\etal\cite{nagamochi1992computing,nagamochi1994implementing}, written in in \CC~(\texttt{NOI-HNSS}) that uses a binary heap. We use this algorithm with small optimizations in the priority queue as a base of our implementation. Chekuri~\etal\cite{Chekuri:1997:ESM:314161.314315} give an implementation of the flow-based algorithm of Hao and Orlin using all optimizations given in the paper (variant \texttt{ho} in~\cite{Chekuri:1997:ESM:314161.314315}), implemented in C, in our experiments denoted as \texttt{HO-CGKLS}. They also give an implementation of the algorithm of Nagamochi~\etal\cite{nagamochi1992computing,nagamochi1994implementing}, denoted as \texttt{NOI-CGKLS}, which uses a heap as its priority queue data structure (variant \texttt{ni-nopr} in~\cite{Chekuri:1997:ESM:314161.314315}). As their implementations use signed integers as edge ids, we include their algorithms only for graphs that have less than~$2^{31}$~edges. Most of our discussions focus on comparisons to the \texttt{NOI-HNSS} implementation as this outperforms the implementations by~Chekuri~\etal

Gianinazzi~\etal~\cite{gianinazzi2018communication} give a MPI implementation of the algorithm of Karger and Stein~\cite{karger1996new}. We performed preliminary experiments on small graphs which can be solved by \texttt{NOI-HNSS}, \texttt{NOI-CGKLS} and \texttt{HO-CGKLS} in less than $3$ seconds. On these graphs, their implementation using $24$ processes took more than $5$ minutes, which matches other studies~\cite{Chekuri:1997:ESM:314161.314315,junger2000practical,henzinger2018practical} that report bad real-world performance of (other implementations of) the algorithm of Karger and Stein. Gianinazzi~\etal report a running time of $5$ seconds for RMAT graphs with $n=16000$ and an average degree of $4000$, using \emph{$1536$ cores}. As \anoi can find the minimum cut on RMAT graphs~\cite{wsvr} of equal size in less than $2$ seconds using \emph{a single core}, we do not include the implementation in~\cite{gianinazzi2018communication} in our experiments.

As our algorithm solves the minimum cut problem exactly, we do not include the $(2+\epsilon)$-approximation algorithm of Matula~\cite{matula1993linear} and the inexact algorithm \texttt{VieCut} in the experiments.

\begin{figure*}[ht]
\resizebox{\textwidth}{!}{
  \begin{tikzpicture}
    \begin{axis}[
      name=plot1,
    title={\texttt{gsh-2015-host}, $k=10$ ($\lambda=1$)},
    ylabel={Speedup to Sequential},
    %ymode=log,
    %xmode=log,
   xtick={1,2,4,8,12,24},
    xmin=0,
    xmax=25,
    ymax=10,
    ]
%% MULTIPLOT(algo) WITH one AS (SELECT AVG(time) AS onetime, algo FROM scale WHERE graph LIKE '%gsh%' and processes=1 group by algo) SELECT processes AS x, onetime/AVG(time) AS y, scale.MULTIPLOT
%% FROM scale,one WHERE scale.graph LIKE '%gsh%' AND one.algo=scale.algo GROUP BY scale.MULTIPLOT,x ORDER BY scale.MULTIPLOT,x
\addplot coordinates { (1,1.0) (2,1.49823) (3,2.32523) (4,2.73834) (5,3.52076) (6,3.87512) (7,4.53537) (8,4.81244) (9,5.40642) (10,5.60338) (11,6.13011) (12,6.29108) (24,8.66612) };
\addlegendentry{algo=noiparbucket,limit};
\addplot coordinates { (1,1.0) (2,1.51154) (3,2.35049) (4,2.81597) (5,3.62175) (6,4.00094) (7,4.7307) (8,5.03587) (9,5.69106) (10,5.94895) (11,6.50358) (12,6.75277) (24,9.54242) };
\addlegendentry{algo=noiparfifo,limit};
\addplot coordinates { (1,1.0) (2,1.49041) (3,2.26933) (4,2.67369) (5,3.36341) (6,3.65491) (7,4.26078) (8,4.52316) (9,5.00845) (10,5.18417) (11,5.56771) (12,5.7178) (24,7.93898) };
\addlegendentry{algo=noiparheap,limit};

\addplot [mark=none] coordinates {(1,1) (10,10)};
\legend{}
\end{axis}

\begin{axis}[
  name=plot2,
  at=(plot1.right of south east),
    title={\texttt{uk-2007-05}, $k=10$ ($\lambda=1$)},
    %xlabel={Number of Processes},
    %ymode=log,
    xshift=6mm,
    %log basis y={2},
    %xmode=log,
    %log basis x={2},
   xtick={1,2,4,8,12,24},
    xmin=0,
    xmax=25,
    ymax=10,
    ]
%% MULTIPLOT(algo) WITH one AS (SELECT AVG(time) AS onetime, algo FROM scale WHERE graph LIKE '%uk%' and processes=1 group by algo) SELECT processes AS x, onetime/AVG(time) AS y, scale.MULTIPLOT
%% FROM scale,one WHERE scale.graph LIKE '%uk%' AND one.algo=scale.algo GROUP BY scale.MULTIPLOT,x ORDER BY scale.MULTIPLOT,x
\addplot coordinates { (1,1.0) (2,1.44457) (3,2.38741) (4,2.78202) (5,3.61606) (6,3.9022) (7,4.72388) (8,5.0009) (9,5.62075) (10,5.83395) (11,6.41581) (12,6.53925) (24,8.68919) };
\addlegendentry{algo=noiparbucket,limit};
\addplot coordinates { (1,1.0) (2,1.45427) (3,2.43538) (4,2.87023) (5,3.77422) (6,4.12578) (7,4.94286) (8,5.22664) (9,5.89796) (10,6.18092) (11,6.78817) (12,6.98936) (24,9.19155) };
\addlegendentry{algo=noiparfifo,limit};
\addplot coordinates { (1,1.0) (2,1.45694) (3,2.29965) (4,2.67507) (5,3.41661) (6,3.71483) (7,4.34879) (8,4.63243) (9,5.15216) (10,5.34193) (11,5.77396) (12,5.94362) (24,8.53983) };
\addlegendentry{algo=noiparheap,limit};

\addplot [mark=none] coordinates {(1,1) (10,10)};
\legend{}
\end{axis}

\begin{axis}[
  name=plot5,
  at=(plot2.right of south east),
    title={\texttt{twitter-2010}, $k=50$ ($\lambda=3$)},
   % xlabel={Number of Processes},
    %ymode=log,
    xshift=6mm,
    %log basis y={2},
    %xmode=log,
    %log basis x={2},
   xtick={1,2,4,8,12,24},
    xmin=0,
    xmax=25,
    ymax=7,
    ]
%% MULTIPLOT(algo) WITH one AS (SELECT AVG(time) AS onetime, algo FROM scale WHERE graph LIKE '%twitter%' and processes=1 group by algo) SELECT processes AS x, onetime/AVG(time) AS y, scale.MULTIPLOT
%% FROM scale,one WHERE scale.graph LIKE '%twitter%' AND one.algo=scale.algo GROUP BY scale.MULTIPLOT,x ORDER BY scale.MULTIPLOT,x
\addplot coordinates { (1,1.0) (2,1.33282) (3,2.09342) (4,2.3518) (5,3.0576) (6,3.26268) (7,3.84108) (8,3.98667) (9,4.47033) (10,4.56725) (11,4.98899) (12,5.06745) (24,6.38372) };
\addlegendentry{algo=noiparbucket,limit};
\addplot coordinates { (1,1.0) (2,1.33599) (3,2.09266) (4,2.35684) (5,3.02567) (6,3.24878) (7,3.79514) (8,3.95372) (9,4.4276) (10,4.54551) (11,4.93441) (12,5.03098) (24,6.42067) };
\addlegendentry{algo=noiparfifo,limit};
\addplot coordinates { (1,1.0) (2,1.32377) (3,2.08327) (4,2.32235) (5,2.97119) (6,3.13024) (7,3.60883) (8,3.72148) (9,4.10508) (10,4.15769) (11,4.44833) (12,4.47833) (24,5.75494) };
\addlegendentry{algo=noiparheap,limit};

\addplot [mark=none] coordinates {(1,1) (10,10)};
\legend{}
\end{axis}

\begin{axis}[
  name=plot3,
  at=(plot5.right of south east),
  xshift=6mm,
    title={\texttt{rhg\_25\_8\_1} ($\lambda=118$)},
   % xlabel={Number of Processes},
   % ymode=log,
   % log basis y={2},
   % xmode=log,
   % log basis x={2},
    legend pos=outer north east,
    legend style={font=\Large},
    label style={font=\Large},
    title style={font=\Large},
    tick label style={font=\Large},
   xtick={1,2,4,8,12,24},
    xmin=0,
    xmax=25,
    ymax=7.5,
    ]
%% MULTIPLOT(algo) WITH one AS (SELECT AVG(time) AS onetime, algo FROM scale WHERE graph LIKE '%8_1%' and processes=1 group by algo) SELECT processes AS x, onetime/AVG(time) AS y, scale.MULTIPLOT
%% FROM scale,one WHERE scale.graph LIKE '%8_1%' AND one.algo=scale.algo GROUP BY scale.MULTIPLOT,x ORDER BY scale.MULTIPLOT,x
\addplot coordinates { (1,1.0) (2,1.39976) (3,2.24775) (4,2.73438) (5,3.52622) (6,3.94708) (7,4.6326) (8,4.90194) (9,5.50095) (10,5.90176) (11,6.27589) (12,6.51512) (24,6.49556) };
\addlegendentry{algo=noiparbucket,limit};
\addplot coordinates { (1,1.0) (2,1.3616) (3,2.2223) (4,2.65557) (5,3.49934) (6,3.79186) (7,4.59682) (8,4.83102) (9,5.41411) (10,5.68156) (11,6.22205) (12,6.42516) (24,6.75327) };
\addlegendentry{algo=noiparfifo,limit};
\addplot coordinates { (1,1.0) (2,1.40166) (3,2.26015) (4,2.72589) (5,3.61387) (6,3.95402) (7,4.74889) (8,5.18733) (9,5.71225) (10,6.04196) (11,6.41044) (12,6.64743) (24,6.29196) };
\addlegendentry{algo=noiparheap,limit};

\addplot [mark=none] coordinates {(1,1) (10,10)};
\legend{}
\end{axis}

 \begin{axis}[
   name=plot4,
   at=(plot3.right of south east),
   xshift=6mm,
    title={\texttt{rhg\_25\_8\_2} ($\lambda=73$)},
  %  xlabel={Number of Processes},
  % ymode=log,
  % log basis y={2},
  % xmode=log,
  % log basis x={2},
   legend pos=outer north east,
   legend style={font=\Large},
   label style={font=\Large},
   title style={font=\Large},
   tick label style={font=\Large},
   xtick={1,2,4,8,12,24},
    xmin=0,
    xmax=25,
    ymax=7.5,
   ]

%1 MULTIPLOT(algo) WITH one AS (SELECT AVG(time) AS onetime, algo FROM scale WHERE graph LIKE '%8_2%' and processes=1 group by algo) SELECT processes AS x, onetime/AVG(time) AS y, scale.MULTIPLOT
%1 FROM scale,one WHERE scale.graph LIKE '%8_2%' AND one.algo=scale.algo GROUP BY scale.MULTIPLOT,x ORDER BY scale.MULTIPLOT,x
\addplot coordinates { (1,1.0) (2,1.37499) (3,2.45014) (4,2.80537) (5,3.70023) (6,4.01833) (7,4.68553) (8,5.10443) (9,5.55565) (10,5.97449) (11,6.38261) (12,6.41539) (24,5.69074) };
\addlegendentry{\texttt{ParCut$_{\hat\lambda}$-BStack}};
\addplot coordinates { (1,1.0) (2,1.34478) (3,2.44026) (4,2.70384) (5,3.52056) (6,3.90222) (7,4.82463) (8,5.20149) (9,5.72159) (10,6.01846) (11,6.59406) (12,6.33723) (24,6.04714) };
\addlegendentry{\texttt{ParCut$_{\hat\lambda}$-BQueue}};
\addplot coordinates { (1,1.0) (2,1.39195) (3,2.40161) (4,2.94518) (5,3.77167) (6,4.09361) (7,4.96292) (8,5.57365) (9,5.73942) (10,6.29447) (11,6.66352) (12,6.69861) (24,5.71081) };
\addlegendentry{\texttt{ParCut$_{\hat\lambda}$-Heap}};

\addplot [mark=none] coordinates {(1,1) (10,10)};
%\legend{}
\end{axis}
\end{tikzpicture}
}
\label{fig:scale}
\end{figure*}

\begin{figure*}
\resizebox{\textwidth}{!}{
  \begin{tikzpicture}
    \begin{axis}[
      name=plot1,
    xlabel={Number of Processes},
    ylabel={Speedup to Sequential},
    %ymode=log,
    %xmode=log,
    xtick={1,2,4,8,12,24},
    xmin=0,
    xmax=25,
    ]
%% MULTIPLOT(algo) WITH one AS (SELECT AVG(time) AS seqtime, algo FROM pq_real WHERE graph LIKE '%gsh%10' and processes=1 and algo="noiheap" group by algo) SELECT processes AS x, min(seqtime)/AVG(time) AS y, scale.MULTIPLOT
%% FROM scale,one WHERE scale.graph LIKE '%gsh%10' GROUP BY scale.MULTIPLOT,x ORDER BY scale.MULTIPLOT,x
\addplot coordinates { (1,0.779197) (2,1.16741) (3,1.81181) (4,2.13371) (5,2.74337) (6,3.01948) (7,3.53394) (8,3.74983) (9,4.21266) (10,4.36614) (11,4.77656) (12,4.90198) (24,6.75261) };
\addlegendentry{algo=noiparbucket,limit};
\addplot coordinates { (1,0.765796) (2,1.15753) (3,1.8) (4,2.15646) (5,2.77352) (6,3.06391) (7,3.62275) (8,3.85645) (9,4.35819) (10,4.55569) (11,4.98042) (12,5.17124) (24,7.30755) };
\addlegendentry{algo=noiparfifo,limit};
\addplot coordinates { (1,0.736603) (2,1.09784) (3,1.6716) (4,1.96945) (5,2.4775) (6,2.69222) (7,3.13851) (8,3.33177) (9,3.68924) (10,3.81868) (11,4.1012) (12,4.21175) (24,5.84788) };
\addlegendentry{algo=noiparheap,limit};

\addplot[black,mark=none] coordinates { (0,1.0) (25,1.0) };
\addlegendentry{\texttt{ParCut$_{\infty}$-Heap}};
\addplot[mark=none,red] coordinates { (0,2.0) };
\addlegendentry{\texttt{NOI$_{\hat\lambda}$-Heap}};
  \addplot[mark=none,brown] coordinates { (0,2.20) (25,2.20) };
  \addlegendentry{\texttt{NOI$_{\hat\lambda}$-BStack}};

\legend{}
\end{axis}

\begin{axis}[
  name=plot2,
  at=(plot1.right of south east),
    xlabel={Number of Processes},
    %ymode=log,
    xshift=6mm,
    %log basis y={2},
    %xmode=log,
    %log basis x={2},
    xtick={1,2,4,8,12,24},
    xmin=0,
    xmax=25,
    ]

%% MULTIPLOT(algo) WITH one AS (SELECT AVG(time) AS seqtime, algo FROM pq_real WHERE graph LIKE '%uk%10' and processes=1 and algo="noiheap" group by algo) SELECT processes AS x, min(seqtime)/AVG(time) AS y, scale.MULTIPLOT
%% FROM scale,one WHERE scale.graph LIKE '%uk%10' GROUP BY scale.MULTIPLOT,x ORDER BY scale.MULTIPLOT,x
\addplot coordinates { (1,0.391996) (2,0.566266) (3,0.935855) (4,1.09054) (5,1.41748) (6,1.52965) (7,1.85175) (8,1.96033) (9,2.20331) (10,2.28689) (11,2.51497) (12,2.56336) (24,3.40613) };
\addlegendentry{algo=noiparbucket,limit};
\addplot coordinates { (1,0.390661) (2,0.568126) (3,0.95141) (4,1.12129) (5,1.47444) (6,1.61178) (7,1.93098) (8,2.04184) (9,2.3041) (10,2.41464) (11,2.65187) (12,2.73047) (24,3.59078) };
\addlegendentry{algo=noiparfifo,limit};
\addplot coordinates { (1,0.36251) (2,0.528154) (3,0.833646) (4,0.969739) (5,1.23855) (6,1.34666) (7,1.57648) (8,1.6793) (9,1.86771) (10,1.9365) (11,2.09312) (12,2.15462) (24,3.09577) };
\addlegendentry{algo=noiparheap,limit};

\addplot[black,mark=none] coordinates { (0,1.0) (25,1.0) };
\addlegendentry{\texttt{ParCut$_{\infty}$-Heap}};
\addplot[mark=none,red] coordinates { (0,2.0) };
\addlegendentry{\texttt{NOI$_{\hat\lambda}$-Heap}};
  \addplot[mark=none,brown] coordinates { (0,1.56) (25,1.56) };
\addlegendentry{\texttt{NOI$_{\hat\lambda}$-BStack}};
\legend{}
\end{axis}

\begin{axis}[
  name=plot5,
  at=(plot2.right of south east),
    xlabel={Number of Processes},
    %ymode=log,
    xshift=6mm,
    %log basis y={2},
    %xmode=log,
    %log basis x={2},
   xtick={1,2,4,8,12,24},
    xmin=0,
    xmax=25,
    ]
%% MULTIPLOT(algo) WITH one AS (SELECT AVG(time) AS seqtime, algo FROM pq_real WHERE graph LIKE '%twitter%50' and processes=1 and algo="noiheap" group by algo) SELECT processes AS x, min(seqtime)/AVG(time) AS y, scale.MULTIPLOT
%% FROM scale,one WHERE scale.graph LIKE '%twitter%' GROUP BY scale.MULTIPLOT,x ORDER BY scale.MULTIPLOT,x
\addplot coordinates { (1,1.61061) (2,2.14666) (3,3.37169) (4,3.78784) (5,4.92461) (6,5.25491) (7,6.18649) (8,6.42098) (9,7.19997) (10,7.35606) (11,8.03532) (12,8.1617) (24,10.2817) };
\addlegendentry{algo=noiparbucket,limit};
\addplot coordinates { (1,1.5922) (2,2.12716) (3,3.33194) (4,3.75256) (5,4.81748) (6,5.17271) (7,6.04263) (8,6.29512) (9,7.04963) (10,7.23738) (11,7.85657) (12,8.01034) (24,10.223) };
\addlegendentry{algo=noiparfifo,limit};
\addplot coordinates { (1,1.4869) (2,1.96831) (3,3.09762) (4,3.4531) (5,4.41785) (6,4.65434) (7,5.36596) (8,5.53345) (9,6.10383) (10,6.18206) (11,6.61421) (12,6.65882) (24,8.557) };
\addlegendentry{algo=noiparheap,limit};

\addplot[black,mark=none] coordinates { (0,1.0) (25,1.0) };
\addlegendentry{\texttt{ParCut$_{\infty}$-Heap}};
\addplot[mark=none,red] coordinates { (0,2.0) };
\addlegendentry{\texttt{NOI$_{\hat\lambda}$-Heap}};
  \addplot[mark=none,brown] coordinates { (0,2.55) (25,2.55) };
\addlegendentry{\texttt{NOI$_{\hat\lambda}$-BStack}};

\legend{}
\end{axis}

\begin{axis}[
  name=plot3,
  at=(plot5.right of south east),
  xshift=6mm,
    xlabel={Number of Processes},
   % ymode=log,
   % log basis y={2},
   % xmode=log,
   % log basis x={2},
    legend pos=outer north east,
    legend style={font=\Large},
    label style={font=\Large},
    title style={font=\Large},
    tick label style={font=\Large},
   xtick={1,2,4,8,12,24},
    xmin=0,
    xmax=25,
    ]

%% MULTIPLOT(algo) WITH one AS (SELECT AVG(time) AS seqtime, algo FROM pq_rhg WHERE graph LIKE '%25_8_1%' and processes=1 and algo="noiheap" group by algo) SELECT processes AS x, min(seqtime)/AVG(time) AS y, scale.MULTIPLOT
%% FROM scale,one WHERE scale.graph LIKE '%25_8_1%' GROUP BY scale.MULTIPLOT,x ORDER BY scale.MULTIPLOT,x
\addplot coordinates { (1,0.903528) (2,1.26473) (3,2.0309) (4,2.47059) (5,3.18604) (6,3.5663) (7,4.18569) (8,4.42903) (9,4.97026) (10,5.3324) (11,5.67044) (12,5.88659) (24,5.86892) };
\addlegendentry{algo=noiparbucket,limit};
\addplot coordinates { (1,0.852676) (2,1.161) (3,1.89491) (4,2.26434) (5,2.9838) (6,3.23323) (7,3.9196) (8,4.1193) (9,4.61648) (10,4.84453) (11,5.3054) (12,5.47858) (24,5.75835) };
\addlegendentry{algo=noiparfifo,limit};
\addplot coordinates { (1,0.975365) (2,1.36713) (3,2.20447) (4,2.65874) (5,3.52484) (6,3.85661) (7,4.63191) (8,5.05954) (9,5.57153) (10,5.89311) (11,6.25252) (12,6.48367) (24,6.13695) };
\addlegendentry{algo=noiparheap,limit};

\addplot[black,mark=none] coordinates { (0,1.0) (25,1.0) };
\addlegendentry{\texttt{ParCut$_{\infty}$-Heap}};
\addplot[mark=none,red] coordinates { (0,0.99) (25,0.99) };
\addlegendentry{\texttt{NOI$_{\hat\lambda}$-Heap}};
  \addplot[mark=none,brown] coordinates { (0,0) };
\addlegendentry{\texttt{NOI$_{\hat\lambda}$-BStack}};

\legend{}
\end{axis}

 \begin{axis}[
   name=plot4,
   at=(plot3.right of south east),
   xshift=6mm,
    xlabel={Number of Processes},
  % ymode=log,
  % log basis y={2},
  % xmode=log,
  % log basis x={2},
   legend pos=outer north east,
   legend style={font=\Large},
   label style={font=\Large},
   title style={font=\Large},
   tick label style={font=\Large},
   xtick={1,2,4,8,12,24},
    xmin=0,
    xmax=25,
    ymin=0.001,
   ]

%1 MULTIPLOT(algo) WITH one AS (SELECT AVG(time) AS seqtime, algo FROM pq_rhg WHERE graph LIKE '%25_8_2%' and processes=1 and algo="noiheap" group by algo) SELECT processes AS x, min(seqtime)/AVG(time) AS y, scale.MULTIPLOT
%1 FROM scale,one WHERE scale.graph LIKE '%25_8_2%' GROUP BY scale.MULTIPLOT,x ORDER BY scale.MULTIPLOT,x
   \addplot coordinates { (1,1.87417) (2,2.57696) (3,4.59197) (4,5.25772) (5,6.93484) (6,7.53101) (7,8.78145) (8,9.56655) (9,10.4122) (10,11.1972) (11,11.9621) (12,12.0235) (24,10.6654) };
\addlegendentry{\texttt{ParCut$_{\hat\lambda}$-BStack}};
\addplot coordinates { (1,1.75653) (2,2.36214) (3,4.28638) (4,4.74937) (5,6.18395) (6,6.85435) (7,8.47459) (8,9.13656) (9,10.0501) (10,10.5716) (11,11.5826) (12,11.1315) (24,10.622) };
\addlegendentry{\texttt{ParCut$_{\hat\lambda}$-BQueue}};
\addplot coordinates { (1,1.93646) (2,2.69546) (3,4.65062) (4,5.70323) (5,7.30368) (6,7.92712) (7,9.6105) (8,10.7931) (9,11.1142) (10,12.189) (11,12.9036) (12,12.9716) (24,11.0588) };
\addlegendentry{\texttt{ParCut$_{\hat\lambda}$-Heap}};

\addplot[black,mark=none] coordinates { (0,1.0) (25,1.0) };
\addlegendentry{\texttt{NOI-HNSS}};
\addplot[mark=none,red] coordinates { (0,0.99) (25,0.99) };
\addlegendentry{\texttt{NOI$_{\hat\lambda}$-Heap}};
  \addplot[mark=none,brown] coordinates { (0,0) };
\addlegendentry{\texttt{NOI$_{\hat\lambda}$-BStack}};
%\legend{}
\end{axis}
\end{tikzpicture}
}
\caption{\label{fig:scale2} Scaling plots for four large graphs. Top: Scalability. Bottom: Speedup compared to \texttt{NOI-HNSS} and fastest sequential algorithm (first $3$ graphs: \texttt{NOI$_{\hat\lambda}$-BStack}¸ last $2$ graphs: \texttt{NOI$_{\hat\lambda}$-Heap}). }
\end{figure*}
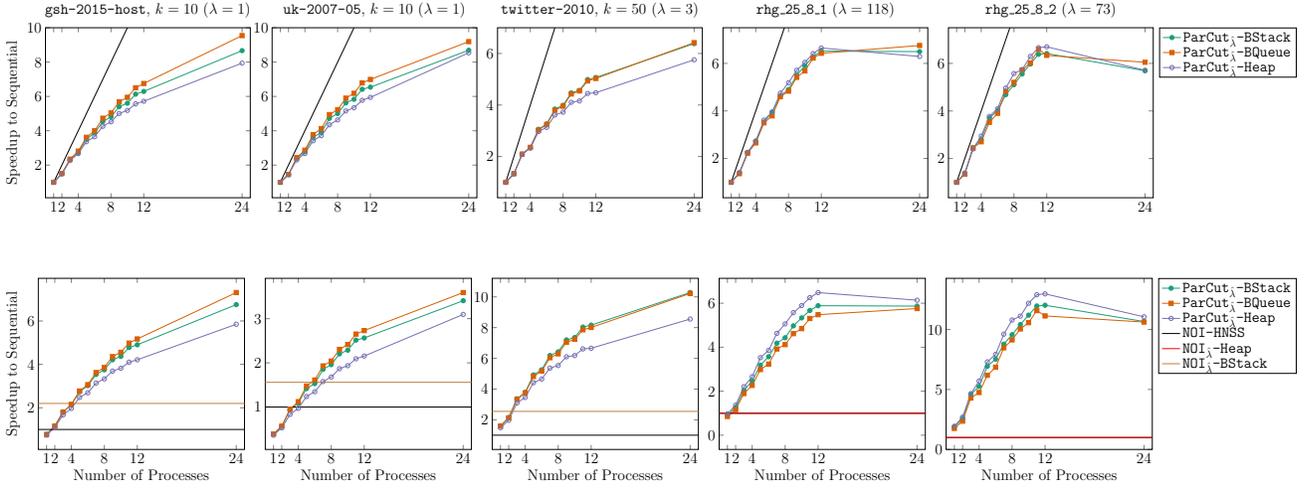

\subsubsection*{Instances.}

We use a set of graph instances from the experimental study of Henzinger~\etal\cite{henzinger2018practical}. The set of instances contains $k$-cores~\cite{batagelj2003m} of large undirected real-world graphs taken from the 10th DIMACS Implementation Challenge~\cite{benchmarksfornetworksanalysis} as well as the Laboratory for Web Algorithmics~\cite{BRSLLP,BoVWFI}. Additionally it contains large random hyperbolic graphs~\cite{krioukov2010hyperbolic, von2015generating} with $n=2^{20} - 2^{25}$ and $m=2^{24} - 2^{32}$. A detailed description of the graph instances is given~in~Appendix~\ref{app:instances}. These graphs are unweighted, however contracted graphs that are created in the course of the algorithm have edge weights.

\subsection{Sequential Experiments.}
\label{exp:seq}

We limit the values in the priority queue $\mathcal{Q}$ to $\hat\lambda$, in order to significantly lower the amount of priority queue operations needed to run the contraction routine. In this experiment, we want to examine the effects of different priority queue implementations and limiting priority queue values have on sequential minimum cut computations. We also include variants which run \texttt{VieCut} first to lower $\hat\lambda$.

We start with sequential experiments using the implementation of \texttt{NOI-HNSS}. We use two variants: \texttt{NOI}$_{\hat\lambda}$ limits values in the priority queue to $\hat\lambda$ while \texttt{NOI-HNSS} allows arbitrarily large values in $\mathcal{Q}$. For \texttt{NOI}$_{\hat\lambda}$, we test the three priority queue implementations, \texttt{BQueue}, \texttt{Heap} and \texttt{BStack}. As the priority queue for \texttt{NOI-HNSS} has priorities of up to the maximum degree of the graph and the contracted graphs can have very large degrees, the bucket priority queues are not suitable for \texttt{NOI-HNSS}. Therefore we only use the implementation of \texttt{NOI-HNSS}~\cite{henzinger2018practical}.
  The variants \texttt{NOI-HNSS-VieCut} and \texttt{NOI$_{\hat\lambda}$-Heap-VieCut} first run the shared-memory parallel algorithm \texttt{VieCut} using all $24$ threads to lower $\hat\lambda$ before running the respective sequential algorithm. We report the total running time, \eg the sum of \texttt{VieCut} and \texttt{NOI}.
% IMPORT-DATA scale logs/scale
% IMPORT-DATA pq_real plotnormal/pq_real

\paragraph{Priority Queue Implementations.}

Figure~\ref{fig:rhg} shows the results for RHG graphs and Figure~\ref{fig:edges} shows the results for real-world graphs, normalized by the running time of \texttt{NOI$_{\hat\lambda}$-Heap-VieCut}. Figure~\ref{fig:perf} gives performance plots for all graphs. We can see that in nearly all sequential runs, \texttt{NOI$_{\hat\lambda}$-BStack} is $5-10\%$ faster than  \texttt{NOI$_{\hat\lambda}$-BQueue}. This can be explained as this priority queue uses \texttt{std::vector} instead of \texttt{std::deque} as its underlying data structure and thus has lower access times to add and remove elements. As all vertices are visited by the only thread, the scan order does not greatly influence how many edges~are~contracted.

In the RHG graphs, nearly no vertices in \texttt{NOI-HNSS} reach priorities in $\mathcal{Q}$ that are much larger than $\hat\lambda$. Usually, less than $5\%$ of edges do not incur an update in $\mathcal{Q}$. Thus, \texttt{NOI-HNSS} and \texttt{NOI$_{\hat\lambda}$-Heap} have practically the same running time.  \texttt{NOI$_{\hat\lambda}$-BStack} is usually $5\%$ slower.

As the real-world graphs are social network and web graphs, they contain vertices with very high degrees. In these vertices, \texttt{NOI-HNSS} often reaches priority values of much higher than $\hat\lambda$ and \texttt{NOI$_{\hat\lambda}$} can actually save priority queue operations. Thus, \texttt{NOI$_{\hat\lambda}$-Heap} is up to $1.83$ times faster than \texttt{NOI-HNSS} with an average (geometric) speedup factor of $1.35$. Also, in contrast to the RHG graphs, \texttt{NOI$_{\hat\lambda}$-BStack} is faster than \texttt{NOI-HNSS} on real-world graphs. Due to the low diameter of web and social graphs, the number of vertices in $\mathcal{Q}$ is very large. This favors the \texttt{BStack} priority queue, as it has constant~access~times. The average geometric speedup of \texttt{NOI$_{\hat\lambda}$-BStack} compared to \texttt{NOI$_{\hat\lambda}$-Heap} is $1.22$.

\paragraph{Reduction of $\hat\lambda$ by \texttt{VieCut}.}

Now we reduce $\hat\lambda$ by \texttt{VieCut} before running \texttt{NOI}. While the other algorithms are slower for denser RHG graphs, \texttt{NOI-HNSS-VieCut} and \texttt{NOI$_{\hat\lambda}$-Heap-VieCut} are faster in these graphs with higher density. This happens as the variants without \texttt{VieCut} find less contractible edges and therefore need more rounds of CAPFOREST. The highest speedup compared to \texttt{NOI$_{\hat\lambda}$-Heap} is reached in RHG graphs with $n=2^{23}$ and an average density of $2^8$, where  \texttt{NOI$_{\hat\lambda}$-Heap-VieCut} has a speedup of factor $4$.

\texttt{NOI$_{\hat\lambda}$-Heap-VieCut} is fastest on most real-world graphs, however when the minimum degree is very close to the minimum cut $\lambda$, running \texttt{VieCut} can not significantly lower $\hat\lambda$. Thus, the extra work to run \texttt{VieCut} takes longer than the time saved by lowering the upper bound $\hat\lambda$. The average geometric speedup factor of \texttt{NOI$_{\hat\lambda}$-Heap-VieCut} on all graphs compared to the variant without \texttt{VieCut} is $1.34$.

In the performance plots in Figure~\ref{fig:perf} we can see that \texttt{NOI$_{\hat\lambda}$-Heap-VieCut} is fastest or close to the fastest algorithm in all but the very sparse graphs, in which the algorithm of Nagamochi~\etal\cite{nagamochi1994implementing} is already very fast~\cite{henzinger2018practical} and therefore using \texttt{VieCut} cannot sufficiently lower $\hat\lambda$ and thus the running time of the algorithm. \texttt{NOI-CGKLS} and \texttt{HO-CGKLS} are outperformed on all graphs.

\subsection{Shared-memory parallelism.}

We run experiments on $5$ of the largest graphs in the data sets using up to $24$ threads on $12$ cores. First, we compare the performance of Algorithm~\ref{algo:parmc} using different priority queues: \texttt{ParCut$_{\hat\lambda}$-Heap}, \texttt{ParCut$_{\hat\lambda}$-BStack} and \texttt{ParCut$_{\hat\lambda}$-BQueue} all limit the priorities to $\hat\lambda$, the result of \texttt{VieCut}.

Figure~\ref{fig:scale2} shows the results of these scaling experiments. The top row shows how well the algorithms scale with increased amounts of processors. The lower row shows the speedup compared to the fastest sequential algorithm of Section~\ref{exp:seq}. On all graphs, \texttt{ParCut$_{\hat\lambda}$-BQueue} has the highest speedup when using $24$ threads. On real-world graphs, \texttt{ParCut$_{\hat\lambda}$-BQueue} also has the lowest total running time. In the large RHG graphs, in which the priority queue is usually only filled with up to $1000$ elements, the worse constants of the double-ended queue cause the variant to be slightly slower than \texttt{ParCut$_{\hat\lambda}$-Heap} also even when running with $24$ threads. In the two large real-world graphs that have a minimum degree of $10$, the sequential algorithm \texttt{NOI$_{\hat\lambda}$-BStack} contracts most edges in a single run of CAPFOREST - due to the low minimum degree, the priority queue operations per vertex are also very low. Thus, \texttt{ParCut$_{\hat\lambda}$} using only a single thread has a significantly higher running time, as it runs \texttt{VieCut} first and performs graph contraction using a concurrent hash table, as described in Section~\ref{contract}, which is slower than sequential graph contraction when using just one thread. In graphs with higher minimum degree, \texttt{NOI} needs to perform multiple runs of CAPFOREST. By lowering $\hat\lambda$ using \texttt{VieCut} we can contract significantly more edges and achieve a speedup factor of up to $12.9$ compared to the fastest sequential algorithm \texttt{NOI$_{\hat\lambda}$-Heap}.
On \texttt{twitter-2010}, $k=50$, \texttt{ParCut$_{\hat\lambda}$-BQueue} has a speedup of $10.3$ to \texttt{NOI-HNSS}, $16.8$ to \texttt{NOI-CGKLS} and a speedup of $25.5$ to \texttt{HO-CGKLS}. The other graphs have more than $2^{31}$ edges and are thus too large for \texttt{NOI-CGKLS} and \texttt{HO-CGKLS}.

\section{Conclusion}
\label{conclusion}

We presented a shared-memory parallel exact algorithm for the minimum cut problem. Our algorithm is based on the algorithms of Nagamochi~\etal~\cite{nagamochi1992computing,nagamochi1994implementing} and of Henzinger~\etal~\cite{henzinger2018practical}. We use different data structures and optimizations to decrease the running time of the algorithm of Nagamochi~\etal by a factor of up to $2.5$. Using additional shared-memory parallelism we further increase the speedup factor to up to $12.9$. Future work includes checking whether our sequential optimizations and parallel implementation can be applied to the $(2+\epsilon)$-approximation algorithm of Matula~\cite{matula1993linear}.

\bibliographystyle{abbrv}
\bibliography{quellen}

\begin{appendix}

\section{Instances and Capforest Pseudocode}
\label{app:instances}

\subsection{Random Hyperbolic Graphs (RHG) \cite{krioukov2010hyperbolic}.}

\label{rhggraphs}

Random hyperbolic graphs replicate many features of
real-world networks~\cite{chakrabarti2006graph}: the degree distribution follows
a power law, they often exhibit a community structure and have a small
diameter. In denser hyperbolic graphs, the minimum cut is often equal to the
minimum degree, which results in a trivial minimum cut. In order to prevent
trivial minimum cuts, we use a power law exponent of $5$. We use the generator
of von Looz \etal\cite{von2015generating}, which is a part of NetworKit~\cite{staudt2014networkit}, to generate unweighted random hyperbolic
graphs with $2^{20}$ to $2^{25}$ vertices and an average vertex degree of $2^5$
to $2^8$. These graphs generally have very few small cuts and the minimum
cut has two partitions with similar sizes.

\begin{table*}[t]
  \setlength\intextsep{0pt}
  \centering
  \begin{tabular}{l|r|r||r|r|r|r|r}
    graph & $n$ & $m$ & $k$ & $n$ & $m$ & $\lambda$ & $\delta$ \\\hline\hline
    \texttt{hollywood-2011} & 2.2M & 114M & 20 & 1.3M & 109M & 1 & 20 \\
     \cite{BRSLLP,BoVWFI}    &&& 60 & 576K & 87M & 6 & 60\\
         &&& 100 & 328K & 71M & 77 & 100\\
         &&& 200 & 139K & 47M & 27 & 200\\\hline
    \texttt{com-orkut} & 3.1M & 117M & 16 & 2.4M & 112M & 14 & 16 \\
    \cite{BRSLLP,BoVWFI}    &&& 95 & 114K & 18M & 89 & 95\\
         &&& 98 & 107K & 17M & 76 & 98\\
         &&& 100 & 103K & 17M & 70 & 100\\\hline
    \texttt{uk-2002} & 18M & 262M & 10 & 9M & 226M & 1 & 10\\
    \cite{benchmarksfornetworksanalysis,BRSLLP,BoVWFI}     &&& 30 & 2.5M & 115M & 1 & 30\\
         &&& 50 & 783K & 51M & 1 & 50\\
         &&& 100 & 98K & 11M & 1 & 100\\\hline
    \texttt{twitter-2010} & 42M & 1.2B & 25 & 13M & 958M & 1 & 25 \\
    \cite{BRSLLP,BoVWFI}     &&& 30 & 10M & 884M & 1 & 30\\
         &&& 50 & 4.3M & 672M & 3 & 50\\
         &&& 60 & 3.5M & 625M & 3 & 60\\\hline
    \texttt{gsh-2015-host} & 69M & 1.8B & 10 & 25M & 1.3B & 1 & 10 \\
    \cite{BRSLLP,BoVWFI}     &&& 50 & 5.3M & 944M & 1 & 50\\
         &&& 100 & 2.6M & 778M & 1 & 100\\
         &&& 1000 & 104K & 188M & 1 & 1000\\\hline
    \texttt{uk-2007-05} & 106M & 3.3B & 10 & 68M & 3.1B & 1 & 10\\
    \cite{benchmarksfornetworksanalysis,BRSLLP,BoVWFI}     &&& 50 & 16M & 1.7B & 1 & 50 \\
         &&& 100 & 3.9M & 862M & 1 & 100\\
         &&& 1000 & 222K & 183M & 1 & 1000\\\hline
  \end{tabular}
  \caption{Statistics of real-world web graphs used in experiments. Original graph size and $k$-cores used in experiments with their respective minimum cuts\label{table:realworld}}
\end{table*}

\subsection{Real-world Graphs.}
\label{rwgraphs}

We use large real-world web graphs and social
networks from~\cite{benchmarksfornetworksanalysis,BRSLLP,BoVWFI}, detailed in Table~\ref{table:realworld}. The minimum cut
problem on these web and social graphs can be seen as a network reliability
problem. As these graphs are generally disconnected and contain vertices with
very low degree, we use a $k$-core
decomposition~\cite{seidman1983network,batagelj2003m} to generate
versions of the graphs with a minimum degree of $k$. The $k$-core of a graph $G =
(V, E)$ is the maximum subgraph $G' = (V',E')$ with $V' \subseteq V$ and $E'
\subseteq E$, which fulfills the condition that every vertex in $G'$ has a
degree of at least $k$. We perform our experiments on the largest connected
component of $G'$. For every real-world graph we use, we compute a set of $4$
different $k$-cores, in which the minimum cut is not equal to the minimum degree.

We generate a diverse set of graphs with different sizes. On the large
graphs \texttt{gsh-2015-host} and \texttt{uk-2007-05}, we use cores with $k$ in 10, 50, 100
and 1000. In the smaller graphs we use cores with $k$ in 10, 30, 50 and
100. \texttt{twitter-2010} and \texttt{com-orkut} only had few cores where the
minimum cut is not trivial. Therefore we used those cores. As
\texttt{hollywood-2011} is very dense, we multiplied the $k$ value of all cores by a factor of 2.

\label{cap}

\begin{algorithm*}[ht!]
  \begin{algorithmic}[1]
    \INPUT $G = (V,E,c) \leftarrow$ undirected graph with integer edge weights, $\hat\lambda \leftarrow$ upper bound for minimum cut
    \OUTPUT $\mathcal{T} \leftarrow$ forest of contractible edges
    \State Label all vertices $v \in V$ ``unvisited'', blacklist $\mathcal{B}$ empty
    \State $\forall v \in V: r(v) \leftarrow 0$
    \State $\forall e \in E: q(e) \leftarrow 0$
    \State $\mathcal{Q} \leftarrow$ empty priority queue
    \State Insert random vertex into $\mathcal{Q}$
    \While{$\mathcal{Q}$ not empty}
    \State $x \leftarrow \mathcal{Q}$.pop\_max() \Comment{Choose unvisited vertex with highest priority}

    \State $\alpha \leftarrow \alpha + c(x) - 2 r(x)$

    \State $\hat\lambda \leftarrow min(\hat\lambda, \alpha)$
    \For{$e = (x,y) \leftarrow$ edge to vertex $y$ not in $\mathcal{B}$ and not visited}
    \If{$r(y) < \hat\lambda \leq r(y) + c(e)$}
    \State $T \leftarrow T \cup e$ \Comment{Mark edge $e$ to contract}
    \EndIf
    \State $r(y) \leftarrow r(y) + c(e)$
    \State $q(e) \leftarrow r(y)$
    \State $\mathcal{Q}(y) \leftarrow r(y)$

    \EndFor

    \State Mark $x$ ``visited''
    \EndWhile

  \end{algorithmic}
  \caption{\label{algo:cap} CAPFOREST}
\end{algorithm*}

\end{appendix}

\end{document}